\documentclass[a4paper,reqno]{amsart}
\pdfoutput=1
\usepackage[pdftex]{thumbpdf,hyperref}
\newtheoremstyle{normal}{}{}{}{}{\bfseries}{}{.5em}{\thmname{#1}\thmnumber{ #2}\thmnote{ (#3)}.}
\theoremstyle{normal}
\newtheorem{lemm}{Lemma}

\newtheorem{rem}{Remark}

\newtheorem{example}{Example}
\makeatletter
\@addtoreset{equation}{section}
\makeatother

\setcounter{tocdepth}{1} 

\usepackage[applemac]{inputenc}
\usepackage{url}
\usepackage{amsmath,amsfonts,amssymb,mathrsfs,dsfont,amsthm}
\usepackage{graphicx}
\usepackage{longtable}
\usepackage{subfigure}
\usepackage{enumerate}
\usepackage{color}
\usepackage[all,knot,poly]{xy}
\usepackage{slashed}
\usepackage{youngtab}
\usepackage[colorinlistoftodos]{todonotes}
\newcommand{\Sym}{\mathcal S}
\newcommand{\Lin}{\mathcal{L}}

\newcommand{\Knot}{\mathcal K}
\renewcommand{\bar}[1]{\overline{#1}}
\renewcommand{\hat}[1]{\widehat{#1}}

\newcommand{\intint}[1]{\left[\kern-0.15em\left[#1\right]\kern-0.15em\right]}
\newcommand{\into}{\ensuremath{\lhook\joinrel\relbar\joinrel\rightarrow}}
\newcommand{\R}{\mathbb{R}}

\newcommand{\N}{\mathbb{N}}
\newcommand{\ZZ}{\mathbb{Z}}

\renewcommand{\vec}[1]{\mathbf{#1}}
\newcommand{\norm}[1]{\Vert#1 \Vert}
\newcommand{\abs}[1]{\vert#1 \vert}

\newcommand{\bigsqplus}{\qed\kern-0.75em+} 
\newcommand{\sqplus}{\qed\kern-1em+}
\newcommand{\semsum}{\odot\kern-0.75em\ltimes}

\newcommand{\tens}[1]{\mathsf{\mathbf{#1}}}

\DeclareMathOperator{\Tr}{\mathrm{Tr}}


\def\XXint#1#2#3{{\setbox0=\hbox{$#1{#2#3}{\int}$}\vcenter{\hbox{$#2#3$}}\kern-.5\wd0}}
\newcommand{\UD}[1]{\mathcal{D}#1}
\newcommand{\ud}[1]{\mathrm{d}#1}

\newcommand{\bra}[1]{\langle #1|}
\newcommand{\ket}[1]{|#1\rangle}
\newcommand{\av}[1]{\langle #1\rangle}

\newcommand{\Part}{\mathscr{P}}
\newcommand{\T}{\mathcal{T}}
\newcommand{\PO}{\mathcal{P}}

\renewcommand{\S}{\mathbb{S}}
\newcommand{\Torus}{\mathbb{T}}

\newcommand{\so}{\mathfrak{so}}

\newcommand{\g}{\mathfrak{g}}

\newcommand{\W}{\mathcal{W}}
\DeclareMathOperator{\ch}{\mathrm{ch}}


\newcommand{\Hilb}{\mathcal{H}}
\newcommand{\AmSLaTeX}{{\protect\the\textfont2 A}%
\kern-.1667em\lower.5ex\hbox {\protect\the\textfont2 M}%
\kern-.125em{\protect\the\textfont2 S}-\LaTeX}
\newcommand{\figref}[1]{\textsc{Fig}.~\ref{#1}}
\newcommand{\tabref}[1]{\textsc{Tab}.~$\ref{#1}$}

\title{Chern--Simons Invariants of Torus Links}
\author{S\'ebastien Stevan}
\address{Section de math\'ematiques\\
Universit\'e de Genève\\
Case postale 64\\
1211 Genève 4 (Switzerland)}
\email{Sebastien.Stevan@unige.ch}
\begin{document}
\begin{abstract}
We compute the vacuum expectation values of torus knot operators in Chern--Simons theory, and we obtain explicit formulae for all classical gauge groups and for arbitrary representations. We reproduce a known formula for the HOMFLY invariants of torus knots and links, and we obtain an analogous formula for Kauffman invariants. We also derive a formula for cable knots. We use our results to test a recently proposed conjecture that relates HOMFLY and Kauffman invariants.
\end{abstract}
\maketitle
\section{Introduction}
The idea of using Chern--Simons theory \cite{Chern74a} to compute knot invariants goes back to Witten's paper \cite{Witten89a} in 1989, when he identified the skein relation satisfied by the Jones polynomial \cite{Jones85a}. Though the theory is in principle exactly solvable, the computations are quite challenging in most cases. One convenient framework to address such problems is the formalism of knot operators \cite{Labastida89a}. For torus knots, an explicit operator formalism has been constructed by \cite{Labastida91a}, that successfully reproduces the Jones polynomial for Wilson loops carrying the fundamental representation of $SU(2)$.

Several further works have generalized the computation to arbitrary representations of $SU(2)$ \cite{Isidro93a}, to the fundamental representation of $U(N)$ \cite{Labastida95a} and to arbitrary representations of $U(N)$ \cite{Labastida01a}. There have also been attempts to compute Kauffman invariants from Chern--Simons theory. With Wilson loops carrying the fundamental representation of $SO(N)$, Labastida and Pérez  obtained a simple formula for the Kauffman polynomial \cite{Labastida96a}. For torus knots of the form $(2,2m+1)$, there are formulae for arbitrary representations of $SO(N)$ \cite{Devi93a,Borhade05a}, but they are not completely explicit due to the presence of a generally unknown group-theoretic sign.

Recently, a simple formula for HOMFLY invariants of torus links has been obtained by using quantum groups methods \cite{Lin06a}. For quantum Kauffman invariants, L. Chen and Q. Chen \cite{Chen09a} had derived a similar formula but published it only after this paper was submitted. These results encouraged us to address the computation of torus link invariants from Chern--Simons point of view. In this paper, we carefully analyze the matrix elements of knot operators to produce simpler formulae. Our approach uses only group-theoretic data and is valid for any gauge group. As an application, we compute the polynomial invariants for all classical Lie groups and for arbitrary representations, and we reproduce the results of \cite{Lin06a}.

As explicit formulae are available, torus knots represent an useful ground to test the conjectured relationship between knot invariants and string theory. The equivalence of  $1/N$ expansion of Chern--Simons theory to topological string theory \cite{Gopakumar99a} implies that the colored HOMFLY polynomial can be related to Gromov--Witten invariants, and thus enjoys highly nontrivial properties \cite{Ooguri00a,Labastida00a}. This conjecture has been extensively checked \cite{Labastida00a,Labastida01a,Lin06a}, and is now proved \cite{Liu07a}. The large-$N$ duality of Chern--Simons theory with gauge group $SO(N)$ or $Sp(N)$ has also been studied \cite{Sinha00a}. In \cite{Bouchard05a}, partial conjectures on the structure of Kauffman invariants have been formulated. The complete conjecture, that also involves HOMFLY invariants for composite representations, has been stated by Mariño \cite{Marino09c}.

The outline of the paper is as follows: in Section 2, we recall some important properties of Wilson loops. Section 3 is devoted to the matrix elements of torus knot operators. In Sections 4,5 and 6, we deduce explicit formulae for HOMFLY and Kauffman invariants of cable knots, torus knots and torus links. Finally, in Section 7 we provide some tests of Mariño's conjecture.

\section{Chern--Simons Theory and Wilson Loop Operators}\label{cs}
Chern-Simon theory is a topological gauge theory on an orientable, boundaryless $3$-manifold $M$ with a simple, simply connected, compact, nonabelian Lie group $G$ and the action
\begin{equation}
\label{cs_action}
S(\vec A)=\frac{k}{4\pi}\int_M\Tr\Big[ \vec A\wedge\ud\vec A+\frac 2 3 \vec A\wedge\vec A\wedge \vec A\Big]
\end{equation}
where $\Tr$ is the trace in the fundamental representation and $k$ is a real parameter. In this expression $\vec A$ is a $\g$-valued $1$-form on $M$, where $\g$ is the Lie algebra of the gauge group $G$.

In the context of knot invariants, $M$ is usually taken to be $\S^3$ and the relevant gauge-invariant observables are Wilson loop operators. Let $\Knot\subset\S^3$ be a knot and $V_\lambda$ an irreducible $\g$-module of highest weight $\lambda$. The associated Wilson loop is 
\begin{equation}
\label{wilson}
\tens W_{\lambda}^\Knot(\vec A)=\Tr_{V_\lambda}\Big[\PO\!\exp\oint_\Knot\vec A\Big],
\end{equation}
where $\PO\!\exp$ is a path-ordered exponential. In other words $\tens W^\lambda_\Knot(\vec A)$ is obtained by taking the trace on $V_\lambda$ of the holonomy along $\Knot$.

As was realized first by Witten \cite{Witten89a}, the vacuum expectation value (vev)
\begin{equation}
\label{vev}
\av{\tens W_{\lambda_1}^{\Knot_1}\cdots \tens W_{\lambda_L}^{\Knot_L}}=\frac{\int\UD[\vec A]\,\tens W_{\lambda_1}^{\Knot_1}(\vec A)\cdots \tens W_{\lambda_L}^{\Knot_L}(\vec A)e^{iS(\vec A)}}{\int\UD[\vec A]\,e^{iS(\vec A)}},
\end{equation}
where the functional integration runs over the gauge orbits of the field, is a framing-dependent invariant of the link $\Lin=\Knot_1\cup\dots\cup\Knot_L$.

Indeed $W_\lambda(\Knot)=\av{\tens W_\lambda^\Knot}$ reproduces the quantum invariant obtained from the category of $U_q(\g)$-modules. In this paper we shall encounter colored HOMFLY invariants $H_\lambda^{\Knot}(t,v)$ corresponding to the group $U(N)$ and colored Kauffman invariants $K_\lambda^\Knot(t,v)$ corresponding to the groups $SO(N)$ and $Sp(N)$.

The vev \eqref{vev} can be computed perturbatively or by nonperturbative methods based on surgery of $3$-manifolds. In this paper we consider these later methods, in particular the formalism of knot operators. Before turning to knot operators, and restricting to torus knots, we review some properties of Wilson loops.

\subsection{Product of Wilson loops with the same orientation}
We provisorily take $G$ to be $U(N)$ for definiteness. Representations that label Wilson loops are usually polynomial representations (those indexed by partitions). When we write $\tens W_\lambda^\Knot$ for a Wilson loop or $W_\lambda(\Knot)$ for an invariant, we implicitly assume that the representation with highest weight $\lambda\in\Lambda_W^+$ is polynomial, so that we can symbolize $\lambda$ by a partition.

The first relation to be mentioned is the well-known fusion rule for Wilson loops. For an oriented link made of two copies of the same knot, with the same orientation for both components (as in \figref{torus_link_1} for instance), one has
\begin{equation}
\label{fusion}
\av{\tens W_\lambda^\Knot\tens W_\mu^\Knot}=\sum_{\nu\in\Part}N_{\lambda\mu}^\nu\av{\tens W_\nu^\Knot},
\end{equation}
where $\Part$ is the set of nonempty partitions and $N_{\lambda\mu}^\nu$ are the coefficients in the decomposition of the tensor product
\begin{equation*}
V_\lambda\otimes V_\mu=\bigoplus_{\nu\in\Part}N_{\lambda\mu}^\nu V_\nu.
\end{equation*}
They are called Littlewood--Richardson coefficients for $U(N)$.

Formula \eqref{fusion} is extremely useful, since it reduces any product of Wilson loops that share the same orientation to a sum of Wilson loops. It only applies to links composed by several copies of the same knot, but this is not a restriction for torus links.

For other Lie groups the same formula holds with different coefficients. For $SO(N)$ and $Sp(N)$ they are given by \cite{Littlewood40a,Koike89a}
\begin{equation}
\label{LRSO}
M_{\lambda\mu}^{\nu}=\sum_{\alpha,\beta,\gamma} N_{\alpha\beta}^\lambda N_{\alpha\gamma}^\mu N_{\beta\gamma}^\nu.
\end{equation}
Here the sum runs over $\Part\cup\{\emptyset\}$.
\begin{rem}
Formula \eqref{fusion} has to be understood as a regularization for the product of two operators evaluated at the same point. It extends the relation
\begin{equation}
\label{fusion2}
\tens W_\lambda^\Knot(\vec A)\tens W_\mu^\Knot(\vec A)=\sum_{\nu\in\Part}N_{\lambda\mu}^\nu\tens W_\nu^\Knot(\vec A)
\end{equation}
between the functionals $\tens W_\lambda^\Knot(\vec A)$ to the quantized Wilson loops. We derive \eqref{fusion2} by noting that the holonomy $\tens U_\Knot$ is an element of $G$, hence it is conjugate to an element of the maximal torus of $G$ \cite{Knapp05a}. Furthermore $\Tr_{V_\lambda}$ is the character of $V_\lambda$ as a function of the eigenvalues, and the product of characters is decomposed as the tensor product of representation.
\end{rem}

\subsection{Product of Wilson loops with different orientations}
The need to consider all rational representations appears when one deals with both orientations for $\Knot$ (as in \figref{torus_link_2} for example). The product of two Wilson loops $\tens W^\Knot_\lambda$ and $\tens W^{-\Knot}_\mu$, where $-\Knot$ denotes $\Knot$ with the opposite orientation, cannot be decomposed as above. In the formalism of the HOMFLY skein of the annulus \cite{Hadji06a}, one would have to use the basis of the full skein, indexed by two partitions. In Chern--Simons theory the same role is played by composite representations.

\begin{figure}[htb]
\begin{center}
\subfigure[$\tens W^\Knot_\lambda\tens W^\Knot_\mu$]{\label{torus_link_1}\includegraphics[width=.3\linewidth]{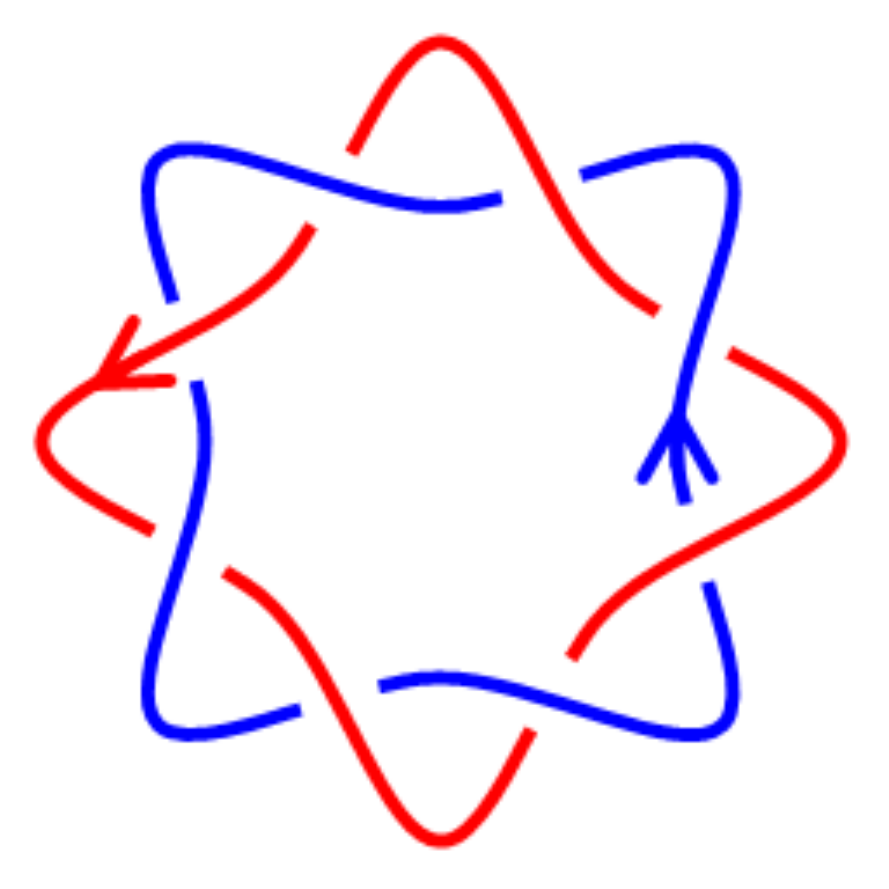}}\qquad\qquad\qquad
\subfigure[$\tens W^\Knot_\lambda\tens W^{-\Knot}_\mu$]{\label{torus_link_2}\includegraphics[width=.3\linewidth]{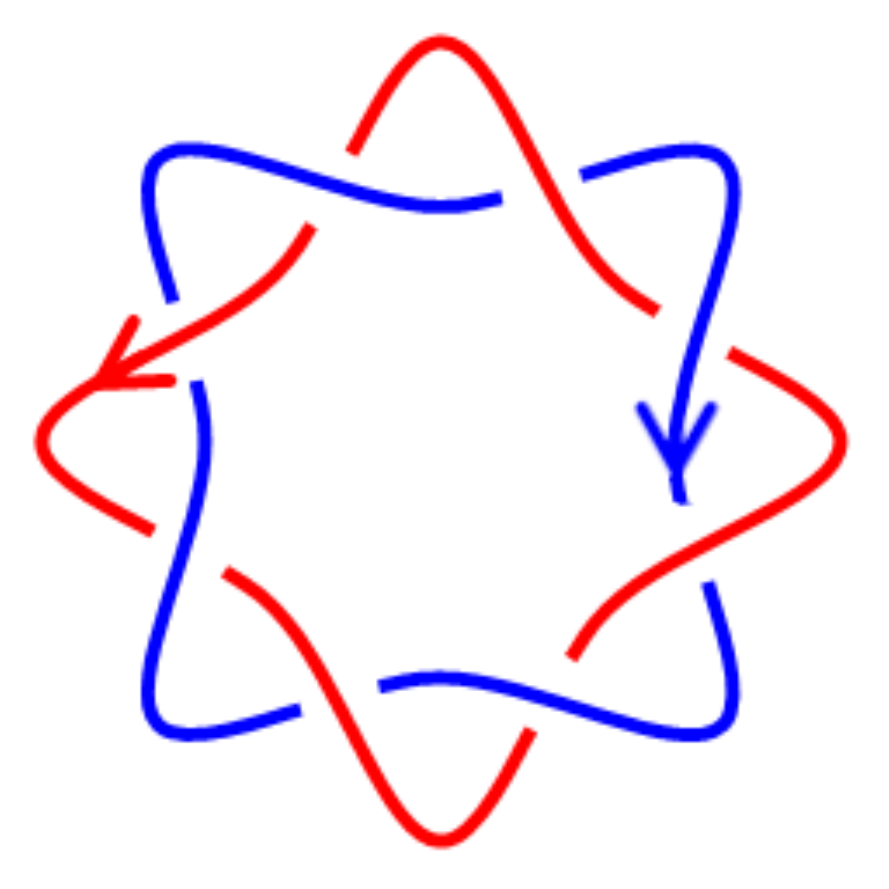}}
\caption{Products of Wilson loops with various orientations.}
\label{torus_link}
\end{center}
\end{figure}

Composite (or mixed tensor) representations
$$V_{[\lambda,\mu]}=\sum_{\eta,\nu,\zeta} (-1)^{\abs\eta} N_{\eta\nu}^\lambda N_{\bar\eta\zeta}^\mu V_\nu\otimes\bar{V_\zeta}$$
are the most general irreducible representations of $U(N)$, where the sum runs over partitions and $\bar\eta$ is the partition conjugate to $\eta$ (the transpose Young diagram). More details on composite representations can be found in \cite{Halverson96a}.

It is straightforward to derive a fusion rule for $\tens W^\Knot_\lambda\tens W^{-\Knot}_\mu$ by decomposing mixed tensor representations. Let $\tens U_\Knot$ be the holonomy along $\Knot$; then
\begin{align*}
\tens W^\Knot_\lambda\tens W^{-\Knot}_\mu&=\Tr_{V_\lambda}\tens U_\Knot\Tr_{V_\mu}\tens U_\Knot^{-1}\\
&=\Tr_{V_\lambda}\tens U_\Knot\Tr_{\bar{V_\mu}}\tens U_\Knot\\
&=\Tr_{V_\lambda\otimes\bar{V_\mu}}\tens U_\Knot.
\end{align*}
One has the following decomposition of  $V_\lambda\otimes\bar{V_\mu}$ in terms of composite representations \cite{Koike89a}
$$V_\lambda\otimes\bar{V_\mu}=\sum_{\eta,\nu,\zeta} N_{\eta\nu}^\lambda N_{\zeta\nu}^\mu V_{[\eta,\zeta]}.$$
If we denote by $\tens W_{[\eta,\zeta]}^\Knot$ the Wilson loop in the composite representation $V_{[\eta,\zeta]}$, we get the fusion rule
\begin{equation}
\label{fusion_composite}
\av{\tens W^\Knot_\lambda\tens W^{-\Knot}_\mu}=\sum_{\eta,\nu,\zeta} N_{\eta\nu}^\lambda N_{\zeta\nu}^\mu \av{\tens W_{[\eta,\zeta]}^\Knot}.
\end{equation}

\begin{rem}
Since $V_{[\lambda,\emptyset]}=V_\lambda$ and $V_{[\emptyset,\lambda]}=V_\lambda^*$, one has 
$$\tens W^\Knot_{[\lambda,\emptyset]}=\tens W^\Knot_\lambda\qquad\text{and}\qquad\tens W^{\Knot}_{[\emptyset,\lambda]}=\tens W^{-\Knot}_\lambda.$$
More generally $\tens W^\Knot_{[\lambda,\mu]}=\tens W^{-\Knot}_{[\mu,\lambda]}$.
\end{rem}

We can as well consider product of Wilson loops carrying composite representations and write a fusion rule for them. It is given by \cite{Koike89a}
$$\av{\tens W^\Knot_{[\lambda,\mu]}\tens W^{\Knot}_{[\eta,\nu]}}=\sum_{\alpha,\beta,\gamma,\delta}\sum_{\xi,\zeta} \Big(\sum_\kappa N_{\kappa\alpha}^\lambda N_{\kappa\beta}^\nu\Big)\Big(\sum_\epsilon N_{\epsilon\delta}^\mu N_{\epsilon\gamma}^\eta \Big) N_{\alpha\gamma}^\xi N_{\beta\delta}^\zeta\av{\tens W_{[\xi,\zeta]}^\Knot}.$$

\subsection{Traces of powers of the holomony}
As will be illustrated later in this paper, traces of powers of the holonomy along a given knot play an important in the gauge theory approach to knot invariants. In fact, such composite observables can be decomposed by a group-theoretic approach.

Given a knot $\Knot$, the holonomy $\tens U_\Knot$ is conjugate to an element in the maximal torus of $G$, and we already mentioned that
\begin{equation}
\label{eq2.1}
\Tr_{V_\lambda}\tens U_\Knot=\ch_\lambda(z_1,\dots,z_r),
\end{equation}
where $\ch_\lambda$ is the character of $\g$ and $z_1,\dots,z_r$ are the variable eigenvalues of $\tens U_\Knot$ ($r$ is the rank of $G$).

The trace of the $n$-th power of the holonomy is then given by
\begin{equation}
\label{eq2.2}
\Tr_\lambda\tens U^n_\Knot=\ch_\lambda(z^n_1,\dots,z^n_r).
\end{equation}
Let $\Lambda_W$ be the weight lattice and $\W$ the Weyl group of $G$. Equation \eqref{eq2.2} is obtained from \eqref{eq2.1} by applying the ring homomorphism
\begin{equation*}\begin{array}{rccc}
\Psi_n:& \ZZ[\Lambda_W]^\W & \longrightarrow & \ZZ[\Lambda_W]^\W\\
& e^\mu& \longmapsto & e^{n\mu}\end{array}
\end{equation*}
which is called the Adams operation. Since the characters form a $\ZZ$-basis of $\ZZ[\Lambda_W]^\W$, there exist integer coefficients  $c_{\lambda,n}^\nu$ univocally determined by the decomposition of $\Psi_n\ch_\lambda$ with respect to the basis $(\ch_\nu)_{\nu\in\Lambda_W^+}$:
\begin{equation}
\label{adams}
\Psi_n\ch_\lambda=\sum_{\nu\in\Lambda_W^+} c_{\lambda,n}^\nu\ch_\nu.
\end{equation}
Hence we have obtained the following formula:
\begin{equation}
\label{adams_wilson}
\Tr_\lambda\tens U_\Knot^n=\sum_{\nu\in\Part}c_{\lambda,n}^\nu\Tr_\nu\tens U_\Knot.
\end{equation}
The coefficients $c_{\lambda,n}^\nu$ depend on the gauge group, and for clarity we will denote those by $a_{\lambda,n}^\nu$ for $U(N)$ and by $b_{\lambda,n}^\nu$ for $SO(N)$.

\begin{rem}
In the case of $U(N)$, the above formula is an easy generalization of 
\begin{equation}
\Tr\tens U^n_\Knot=\sum_{\lambda\in\Part_n}\chi_\lambda(\mathcal C_{(n)})\Tr_{V_\lambda}\tens U_\Knot,
\end{equation}
where $\chi_\lambda$ is the character of the symmetric group $\Sym_N$ in the representation indexed by the partition $\lambda$ and $\mathcal C_{(n)}$ is the conjugacy class of one $n$-cycle in $\Sym_N$. This formula is precisely \eqref{adams_wilson} for the the fundamental representation of $U(N)$. As we will see later, the coefficients $a_{\lambda,n}^\nu$ can be expressed in terms of the characters of the symmetric group.
\end{rem}

\section{Knot Operators Formalism}
We move towards the study of Wilson loop operators associated with torus knots. The main result of this section is a formula for the matrix elements of torus knot operators that is much simpler than the one of Labastida \textit{et al.} \cite{Labastida91a}. Eventually, we will provide a simple formula for the quantum invariants of torus knots.

\subsection{Construction of the operator formalism} If a knot $\Knot$ lies on a surface $\Sigma$, the Wilson loop associated with $\Knot$ can be represented by an operator $\tens W_\lambda^\Knot$ acting on a finite-dimensional Hilbert space $\Hilb(\Sigma)$. For example, the trefoil knot pictured on \figref{trefoil} lies on the torus $\Torus^2$, and hence can be represented by an operator on $\Hilb(\Torus^2)$.

\begin{figure}[htb]
\begin{center}
\includegraphics[width=.7\linewidth]{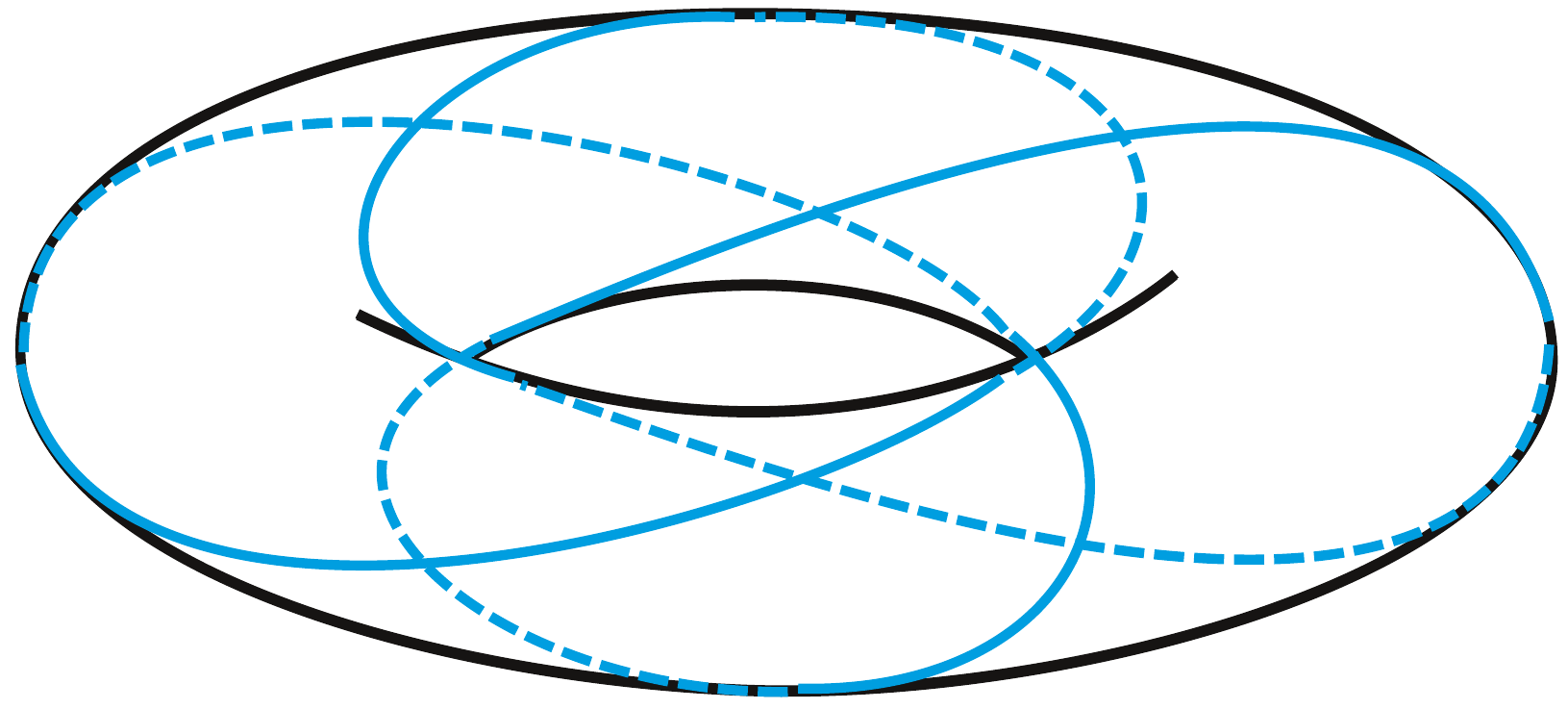}
\caption{Knot lying on a surface (torus knot).}\label{trefoil}
\end{center}
\end{figure}

In the case of torus knots, an important achievement of \cite{Labastida91a} is the construction of the operator formalism that was just alluded to. The original paper treats the case of $U(N)$ and arbitrary gauge groups are addressed in \cite{Labastida96a}. $\Hilb(\Torus^2)$ is the physical Hilbert space of Chern--Simons theory on $\R\times\Torus^2$, which is the finite-dimensional complex vector space with orthonormal basis
\begin{equation}
\label{basis}
\Big(\ket{\rho+\lambda}:\lambda\in\Lambda_W^+\Big)
\end{equation}
indexed by strongly dominant weights. Each of these states is obtained by inserting a Wilson loop in the representation $\lambda$ along the noncontractible cycle of the torus (\figref{unknot}). The state $\ket\rho$ associated with the Weyl vector $\rho$ corresponds to the vacuum (no Wilson loop inserted). 

\begin{figure}[htb]
\begin{center}
\includegraphics[width=.7\linewidth]{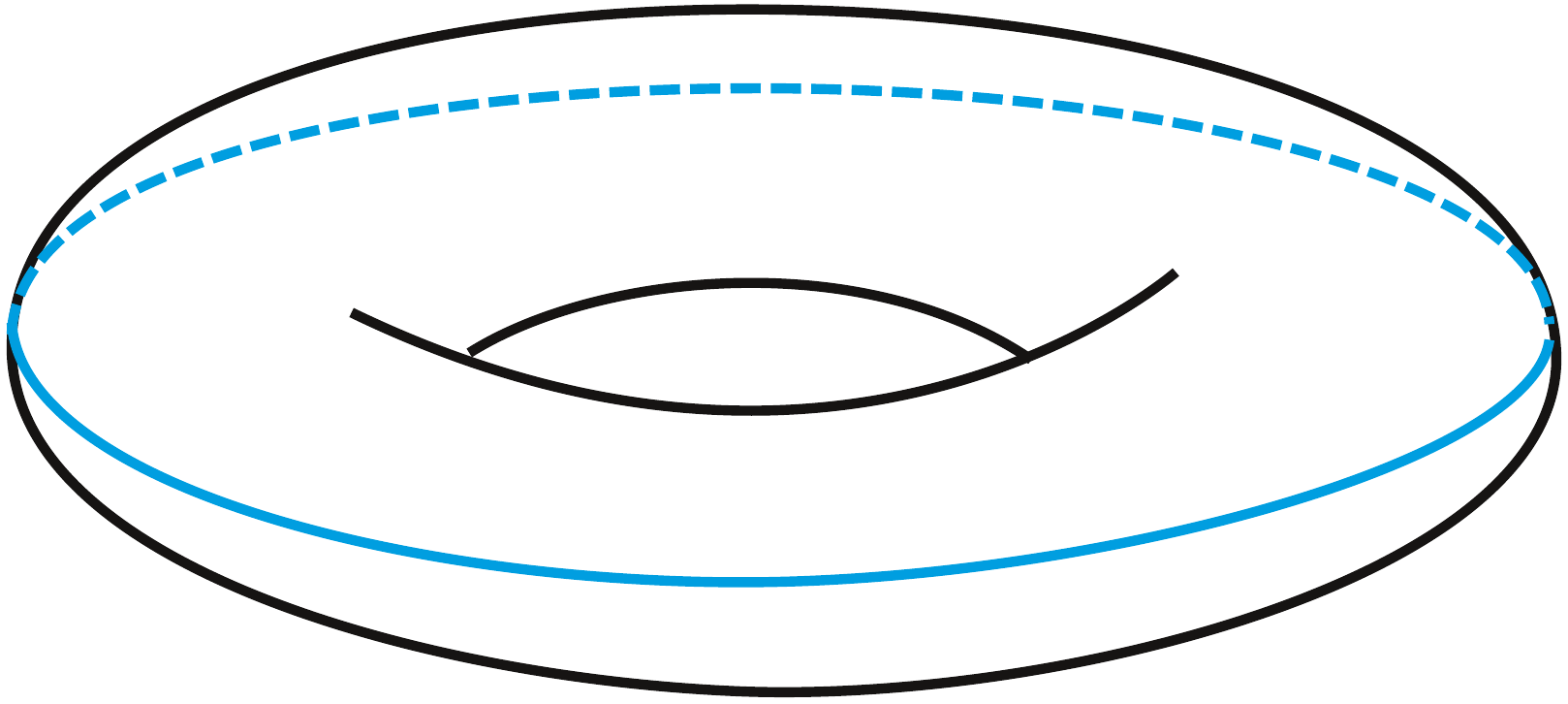}
\caption{Wilson loop $\tens W^{(1,0)}_\lambda$ around the noncontractible cycle of $\Torus^2$.}\label{unknot}
\end{center}
\end{figure}

To be more rigorous, one should restrict \eqref{basis} to integrable representations at level $k$. However, one can show that, provided $k$ is large enough, all representations that arise from the action of knot operators are integrable. Hence, we formally work as if $k$ were infinite.

We denote by $\Torus^n_m$ the $(n,m)$-torus link. $\Torus_m^n$ is a knot if and only if $n$ and $m$ are coprime. We denote by $\tens W^{(n,m)}_\lambda$ the corresponding torus knot operator. The following formula is due to \cite{Labastida91a} for the group $U(N)$, and to \cite{Labastida96a} for an arbitrary gauge group:
\begin{equation}
\label{knot_op}
\tens W_\lambda^{(n,m)}\ket p=\sum_{\mu\in M_\lambda}\exp\Big[i\pi\frac{nm}{2yk+\check c}\mu^2+2\pi i\frac{m}{2yk+\check c}p\cdot\mu\Big]\ket{p+n\mu}.
\end{equation}
In this formula, $M_\lambda$ denotes the set of weights of the irreducible $G$-module $V_\lambda$, $y$ is the Dynkin index of the fundamental representation and $\check c$ is the dual Coxeter number of $G$. The quantization condition requires that $2yk$ is an integer. 

Expression \eqref{knot_op} is actually more complicated than it seems, because not all weights $p+n\mu$ are of the form $\rho+\nu$ for some $\nu\in\Lambda_W^+$. Hence it is very difficult to get tractable formulae for $\av{\tens W_\lambda^\Knot}$ from \eqref{knot_op}. To simplify the computation of the invariants, we shall provide simple expressions for the matrix elements. This result has been established in our master's thesis \cite{Stevan09a} for the group $SU(N)$.

\subsection{Parallel cabling of the unknot}
To begin with, we consider an $n$-parallel cabling\footnote{Here parallel cabling is not to be understood in the classical sense. Usually the $n$-parallel cable of a knot is a $n$-component link, which should be represented by the product of operators $(\Tr_{V_\lambda}\tens U)^n$. In our case the $n$-parallel cable is the quantum quantity $\Tr_{V_\lambda}(\tens U^n)$.} of the unknot represented by the operator $\tens W^{(n,0)}_\lambda$. It may look a bit awkward to consider such an operator, but if we manage to cope with the exponential factor we can reduce any $\tens W^{(n,m)}_\lambda$ to $\tens W^{(n,0)}_\lambda$. From our considerations on powers of the holonomy, it is clear that
\begin{equation*}
\tens W^{(n,0)}_\lambda=\sum_{\nu\in\Lambda_W^+}c_{\lambda,n}^\nu\tens W^{(1,0)}_\nu
\end{equation*}

As a result of this operator expansion, and since $\tens W_\lambda^{(1,0)}\ket{\rho}=\ket{\rho+\lambda}$, we get the formula
\begin{equation}
\label{cabling_unknot}
\tens W^{(n,0)}_\lambda\ket\rho=\sum_{\nu\in\Lambda_W^+}c_{\lambda,n}^\nu\ket{\rho+\nu}.
\end{equation}
This equality can also be proved from the explicit representation of $\tens W^{(n,m)}_\lambda$ on $\Hilb(\Torus^2)$. More details are given in Appendix \ref{appendix_A}.

\subsection{Matrix elements of torus knot operators}To deal with the generic torus knot operator $\tens W^{(n,m)}_\lambda$, we introduce a diagonal operator
$$\tens D\ket{\rho+\lambda}=e^{2\pi i\frac m n h_{\rho+\lambda}}\ket{\rho+\lambda},$$
where
$$h_{p}=\frac{p^2-\rho^2}{2(2yk+\check c)}$$
is a conformal weight of the WZW model. The action of $\tens W_\lambda^{(n,m)}$ and $\tens W^{(n,0)}_\lambda$ on $\ket{\rho+\eta}$ differ only by an exponential factor, which is
$$ \pi i\Big[\frac{nm}{2yk+\check c}\mu^2+\frac{2m}{2yk+\check c}p\cdot\mu\Big]=\frac{m\pi i}{n(2yk+\check c)}\big[(p+n\mu)^2-p^2\big].$$
It follows immediately that 
\begin{equation}
\label{fract_twist}
\tens W^{(n,m)}_\lambda=\tens D\tens W^{(n,0)}_\lambda\tens D^{-1}.
\end{equation}
Using this result and our discussion on $\tens W_\lambda^{(n,0)}$, we obtain a simple formula for the matrix elements of $\tens W_\lambda^{(n,m)}$:
\begin{equation}
\label{vev_torus}
\tens W^{(n,m)}_\lambda\ket\rho=\sum_{\nu\in\Lambda_W^+}c_{\lambda,n}^\nu e^{2\pi i\frac m nh_{\rho+\nu}}\ket{\rho+\nu}.
\end{equation}
\begin{rem}
This formula contains the same ingredients as Lin and Zheng's formula \cite{Lin06a} for the colored HOMFLY polynomial. One of our goals was to reproduce this formula in the framework of Chern--Simons theory.
\end{rem}

\subsection{Fractional twists}
Formula \eqref{vev_torus} resembles a result of Morton and Manch\'on \cite{Morton08a} on cable knots, to which we shall return in Section \ref{cabling}. Following their terminology, we shall refer to $\tens D$ as a fractional twist. In fact, there are intrinsic reasons in Chern--Simons theory to refer to $\tens D$ as a fractional twist.

We recall that the mapping class group of the torus is $SL(2,\ZZ)$. It has two generators, $\tens T$ and $\tens S$; the former represents a Dehn twist and the later exchanges the homology cycles. There is an unitary representation $\mathcal R:SL(2,\ZZ)\longrightarrow GL\big(\Hilb(\Torus^2)\big)$ \cite{Fuchs97a}, and $\tens T$ acts by
\begin{equation*}
\mathcal R(\tens T)\ket p=e^{2\pi i(h_p+\frac{c}{12})}\ket{p}
\end{equation*}
where
$$c=\frac{2yk\dim\g}{2yk+\check c}.$$
If we redefine $\tens D$ to act as
\begin{equation*}
\tens D\ket p=e^{2\pi i\frac m n(h_p+\frac{c}{12})}\ket{p},
\end{equation*}
formula \eqref{fract_twist} remains true and we can consider $\tens D$ as the $\frac m n$-th power of $\mathcal R(\tens T)$. Furthermore $SL(2,\ZZ)$ acts by conjugation
\begin{equation}
\label{action_t}
\mathcal R(\tens M)\tens W^{(n,m)}_\lambda\mathcal R(\tens M)^{-1}=\tens W^{(n,m)\tens M}_\lambda,
\end{equation}
where $(n,m)\tens M$ stands for the natural action by right multiplication.

If we define $\tens T^{m/n}=\begin{pmatrix}
1 & \frac m n\\
0 & 1
\end{pmatrix}$ and extend $\mathcal R$ to such elements, then $\tens D=\mathcal R(\tens T^{m/n})$ and formula \eqref{fract_twist} also extends to
\begin{equation*}
\mathcal R(\tens T^{m/n})\tens W^{(n,0)}_\lambda\mathcal R(\tens T^{m/n})^{-1}=\tens W^{(n,0)\tens T^{m/n}}_\lambda=\tens W^{(n,m)}_\lambda.
\end{equation*}
With this identification it is clear why $\tens T^{m/n}$ (and its representative $\tens D$) should be called a fractional twist. It is, however, less obvious that $\mathcal R$ extends to $\tens T^{m/n}$.

\begin{rem}
Any torus knot can be obtained from the unknot by a complicated sequence of Dehn twists along both homology cycles. With a fractional twist we obtain $\Torus^n_m$ in one step from $n$-copies of the unknot. 

Our computations indicate that fractional twists have simple actions on Chern--Simons observables (at least on torus knot operators). Hopefully, fractional twists apply to more general knots.
\end{rem}

\section{Invariants of Cable Knots}\label{cabling}
We extend our analysis to cable knots from the point of view of Chern--Simons theory. Consider a knot $\Knot\subset\S^3$ and its tubular neighborhood $\T_\Knot$. Let $Q$ be a knot in the standard solid torus $\T$ and $i_\Knot:\T\into\T_\Knot$ the embedding of $\T$ into $\T_\Knot$. The satellite $\Knot*Q$ is the knot $i_\Knot(Q)$ obtained by placing $Q$ in the tubular neighborhood of $\Knot$. In case the pattern $Q$ is a torus knot, the satellite is called a cable. \figref{satellite} illustrates a cabling of the trefoil.

\begin{figure}[htb]
\begin{center}
\begin{tabular}{c@{\qquad\qquad\qquad}c}
\includegraphics[width=.3\linewidth]{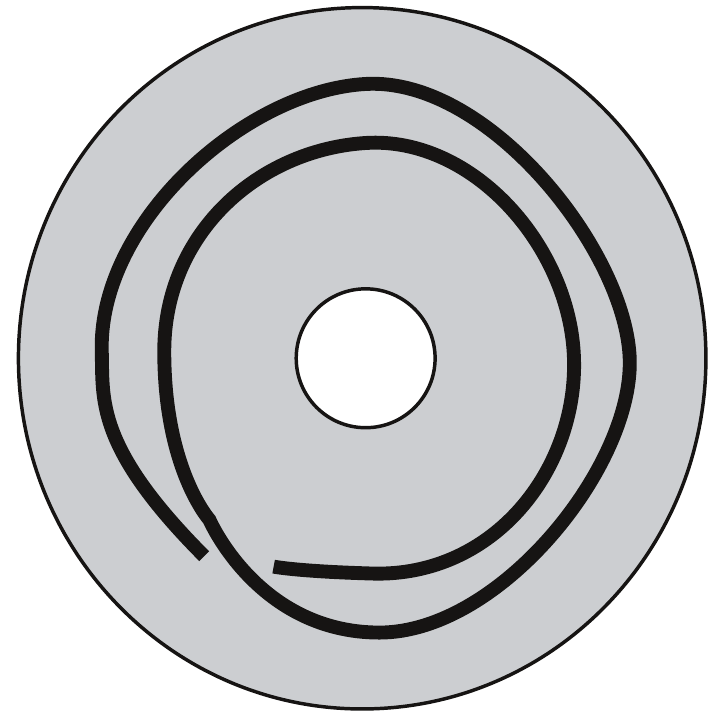} & 
\includegraphics[width=.3\linewidth]{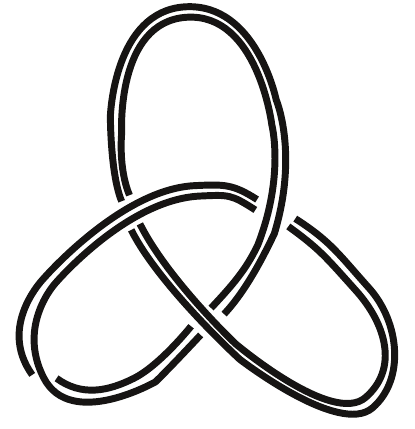}\\
$Q=\Torus^2_1$ & $\Knot*Q$
\end{tabular}
\caption{Cabling of the trefoil knot by the $(2,1)$-torus knot pattern.}
\label{satellite}
\end{center}
\end{figure}

We follow the procedure described in \cite{Witten89a}, translated in terms of knot operators. The path integral over the field configuration with support in $M'=\S^3\setminus\bar{\T_\Knot}$ gives a state
\begin{equation*}
\bra{\phi_{M'}}\in\Hilb(\partial\T_\Knot)^*,
\end{equation*}
since the boundary of $M'$ is $\partial\T_\Knot$ with the opposite orientation, and the path integral over $\T$ gives a state
\begin{equation*}
\tens W_\lambda^{(n,m)}\ket{\phi_\T}\in\Hilb(\Torus^2)
\end{equation*}
when the pattern $\Torus_m^n$ is inserted in the solid torus. The homeomorphism
$$i_\Knot|_{\Torus^2}:\Torus^2\longrightarrow\partial\T_\Knot$$
is represented by an operator $\tens F_\Knot:\Hilb(\Torus^2)\longrightarrow \Hilb(\partial\T_\Knot)$. We deduce the formula
\begin{equation*}
W_\lambda(\Knot*\Torus^n_m)=\frac{\bra{\phi_{M'}}\tens F_\Knot\tens W_\lambda^{(n,m)}\ket{\phi_\T}}{\bra{\phi_{M'}}\tens F_\Knot\ket{\phi_\T}}.
\end{equation*}
In particular, when the trivial pattern $\Torus^1_0$ is placed in the neighborhood $\T_\Knot$, the resulting satellite is $\Knot$:
\begin{equation*}
W_\lambda(\Knot)=\frac{\bra{\phi_{M'}}\tens F_\Knot\tens W_\lambda^{(1,0)}\ket{\phi_\T}}{\bra{\phi_{M'}}\tens F_\Knot\ket{\phi_\T}}.
\end{equation*}

Using our relation between $\tens W^{(n,m)}_\lambda$ and $\tens W^{(1,0)}_\lambda$, we deduce the following formula for the invariant of cable knots:
\begin{equation}
W_\lambda(\Knot*\Torus^n_m)=\sum_{\nu\in\Lambda_W^+}a_{\lambda,n}^\nu e^{-2\pi i\frac m nh_{\rho+\nu}} W_\nu(\Knot)
\end{equation}
for $U(N)$, and the same formula with $a_{\lambda,n}^\nu $ replaced by $b_{\lambda,n}^\nu $  for $SO(N)$. This formula has been proved by Morton and Manch\'on \cite{Morton08a} for HOMFLY invariants. The analogous for Kauffman invariants seems to be new.

\section{Quantum Invariants of Torus Knots}
In the preceding we have not specified the $3$-manifold $M$ onto which the knots are embedded, but the construction of the operator formalism implicitly requires $M$ to admit a genus-$1$ Heegaard splitting. The case of interest, which is $M=\S^3$, admits the decomposition into two solid tori pictured on \figref{heegaard}.

\begin{figure}[htb]
\begin{center}
\includegraphics[width=.5\linewidth]{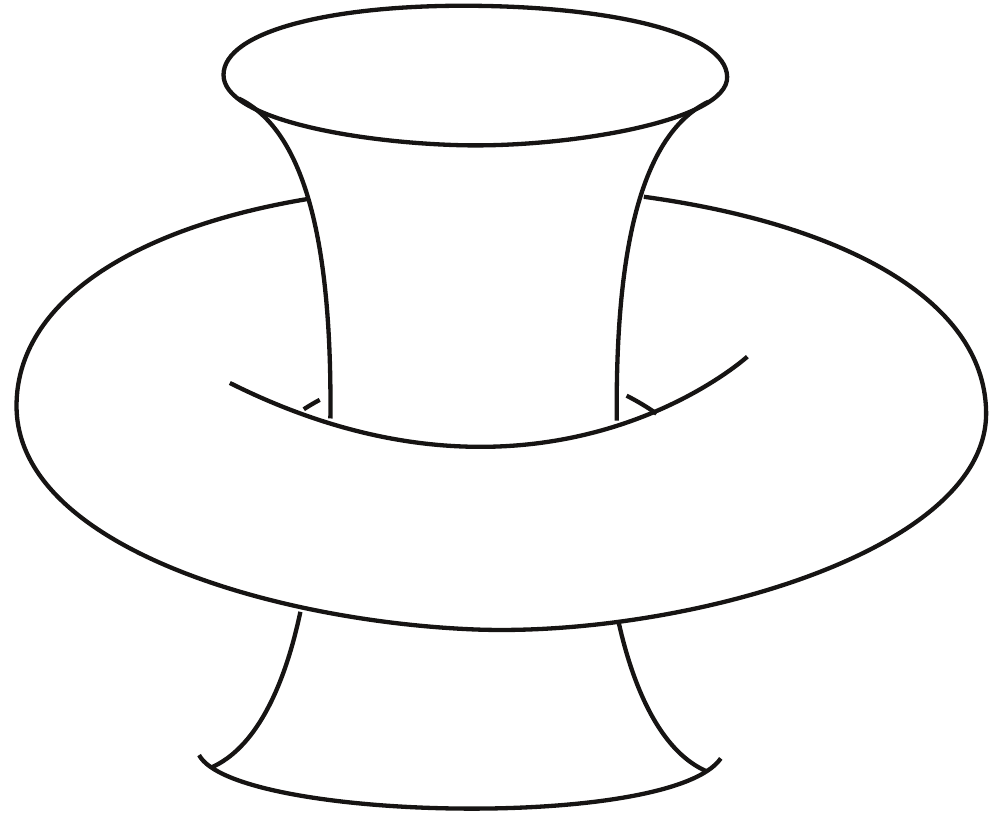}
\caption{Heegaard splitting of $\S^3$ as two solid tori.}\label{heegaard}
\end{center}
\end{figure}

The choice of a homeomorphism to glue both solid tori together determines Chern--Simons invariants through the following formula
\cite{Labastida95a}
\begin{equation}
W_\lambda(\Torus_m^n)=\frac{\bra\rho\tens{FW}^{(n,m)}\ket\rho}{\bra\rho\tens F\ket\rho},
\end{equation}
where $\tens F$ is an operator on $\Hilb(\Torus^2)$ that represents the homeomorphism. But this choice also determines a framing $w(\Knot)$ of the knot. We will correct $W_\lambda(\Knot)$ by the deframing factor $e^{-2\pi i w(\Knot)h_{\rho+\lambda}}$ \cite{Witten89a} to express the invariants in the standard framing.

It is common to glue the solid tori along the homeomorphism represented by $\tens S$ in the mapping class group (the one that exchanges the two homology cycles of $\Torus^2$). The framing determined by this choice turns out to be $mn$ for the $(n,m)$-torus knot. Its action on $\Hilb(\Torus^2)$ is given by the Kac--Peterson formula \cite{Fuchs97a}
\begin{equation}
\label{s_matrix}
\bra p\tens S\ket{p'}=\frac{i^{\abs{\Delta_+}}}{(2yk+\check c)^{1/2}}\Big|\frac{\Lambda_W}{\Lambda_R}\Big|\sum_{w\in\W}(-1)^w e^{-\frac{2\pi i}{2yk+\check c} p\cdot w(p')}.
\end{equation}

Depending on the choice of the gauge group, several invariants can be computed. Our results apply to any semisimple Lie group, but we will restrict ourselves to classical Lie groups. As it turns out, the group $U(N)$ reproduces the colored HOMFLY invariants, whereas both groups $SO(N)$ and $Sp(N)$ reproduce the colored Kauffman invariants. 
\subsection{Colored HOMFLY polynomial}
The precise relation between colored HOMFLY invariants and Chern--Simons invariants with gauge group $U(N)$ is the following:
\begin{equation}
\label{homfly}
H_\lambda^{\Knot}(t,v)=e^{-2\pi iw(\Knot)h_{\rho+\lambda}}W_\lambda(\Knot)\big|_{e^{\frac{-\pi i}{k+N}}=t,t^N=v}
\end{equation}
where $t=e^{\frac{-\pi i}{k+N}}$ and $v=t^N$ are considered as independent variables. Since $G=U(N)$ has been fixed, we have replaced $\check c$ by $N$ and $y$ by $1/2$.

We use the notation $H^{(n,m)}_\lambda$ for the HOMFLY invariants of the $(n,m)$-torus knot. It is easy to see that $e^{2\pi ih_{\rho+\lambda}}=t^{-\varkappa_\lambda}v^{-\abs\lambda}$, where $\varkappa_\lambda=\sum_{i=1}^{\ell(\lambda)}(\lambda^i-2i+1)\lambda^i$. By using the action of knot operators,
\begin{align*}
H_\lambda^{(n,m)}(t,v)&=e^{-2\pi inmh_{\rho+\lambda}}W_\lambda(\Torus_m^n)\big|_{e^{-\frac{\pi i}{k+N}}=t,t^N=v}\\
&=t^{mn\varkappa_\lambda}v^{mn\abs\lambda}\sum_{\nu\in\Lambda_W^+}a_{\lambda,n}^\nu t^{-\frac m n\varkappa_\nu}v^{-\frac m n\abs\nu} W_\nu(\Torus_0^1).
\end{align*}
The invariant of the unknot $W_\nu(\Torus_0^1)$ is called the quantum dimension of $V_\lambda$. Using the Kac--Peterson formula \eqref{s_matrix} and the Weyl character formula, one obtains
$$ W_\lambda(\Torus_0^1)=\frac{\bra\rho\tens S\ket{\rho+\lambda}}{\bra\rho\tens S\ket\rho}=\ch_\lambda\Big[ -\frac{2\pi i}{k+N}\rho\Big].$$
This expression is a function of $t$ and $v$ given by the Schur polynomial $s_\lambda(x^1,\dots,x^N)$ evaluated at $x^i=t^{N-2i+1}$. We denote this function by $s_\lambda(t,v)$.

Finally, by showing that all $\nu\in\Part$ appearing in the sum satisfy $\abs\nu=n\abs\lambda$, we obtain the following formula:
\begin{equation}
\label{homfly_torus}
H_\lambda^{(n,m)}(t,v)=t^{mn\varkappa_\lambda}v^{m(n-1)\abs\lambda}\sum_{|\nu|=n|\lambda|}a_{\lambda,n}^\nu t^{-\frac m n\varkappa_\nu}s_\nu(t,v).
\end{equation}
This formula has already been proved by Lin and Zheng \cite{Lin06a} starting from the rigorous quantum group definition. This formula is much simpler than the one originally obtained by Labastida and Mariño by using knot operators \cite{Labastida01a}.

For actual calculations the following expression is useful:
$$a_{\lambda,n}^\nu=\sum_{\mu\in\Part_{\abs\lambda}}\frac{1}{z_\mu}\chi_\lambda(\mathcal C_\mu)\chi_\nu(\mathcal C_{n\mu}).$$
It is easily proved using Frob\'enius formula for the characters of the symmetric group.

\begin{example}
Apart from the examples found in \cite{Lin06a}, we obtained for $(3,m)$-torus knots the following results:
{\small \begin{align*}
H_{\Yboxdim4pt\yng(3)}^{(3,m)}&=t^{18m}v^{6m}\Big[t^{-24m} s_{(9)}-t^{-18m}s_{(8,1)}+t^{12m} s_{(7,1^2)} \\
&\qquad\qquad+t^{-10m} s_{(6,3)}- t^{-8m}s_{(6,2,1)}-t^{-8m} s_{(5,4)}\\
&\qquad\qquad+t^{-4m}s_{(5,2^2)}+t^{-4m}s_{(4^2,1)}-t^{-2m}s_{(4,3,2)}+s_{(3^3)}\Big]
 \end{align*}
 \begin{align*}
H_{\Yboxdim4pt\yng(2,1)}^{(3,m)}&=v^{6m}\Big[t^{-10m} s_{(6,3)}- t^{-8m}s_{(6,2,1)}+t^{-6m}s_{(6,1^3)}-t^{-8m} s_{(5,4)}\\
&\qquad\qquad+t^{-4m}s_{(5,2^2)}-s_{(5,1^4)}+t^{-4m}s_{(4^2,1)}-t^{-2m}s_{(4,3,2)}\\
&\qquad\qquad+t^{6m}s_{(4,1^5)}+2s_{(3^3)}-t^{2m}s_{(3^2,2,1)}+t^{4m}s_{(3^2,1^3)}\\
&\qquad\qquad+t^{4m}s_{(3,2^3)}-t^{8m}s_{(3,2,1^4)}-t^{8m}s_{(2^4,1)}+t^{10m}s_{(2^3,1^3)}\Big]
 \end{align*}
 \begin{align*}
H_{\Yboxdim4pt\yng(1,1,1)}^{(3,m)}&=t^{-18m}v^{6m}\Big[s_{(3^3)}-t^{2m}s_{(3^2,2,1)}+t^{4m}s_{(3^2,1^3)}\\
&\qquad\qquad+t^{4m}s_{(3,2^3)}-t^{8m}s_{(3,2,1^4)}+t^{12m}s_{(3,1^6)}\\
&\qquad\qquad-t^{8m}s_{(2^4,1)}+t^{10m}s_{(2^3,1^3)}-t^{18m}s_{(2,1^7)}+t^{24m}s_{(1^9)}\Big]
\end{align*}}
\end{example}
\begin{rem}
For the sake of simplicity, we have restricted our analysis to polynomial representations of $U(N)$; analogous formulae, which will not be presented there, exist for composite representations. For example, Paul \textit{et al.} \cite{Paul10a} compute such invariants for $(2,2m+1)$-torus knots.
\end{rem}

\subsection{Colored Kauffman polynomial}
Colored Kauffman invariant are obtained from Chern--Simons theory with gauge group $SO(N)$ by
\begin{equation}
K_\lambda^{\Knot}(t,v)=e^{-2\pi i w(\Knot)h_{\rho+\lambda}}W_\lambda(\Knot)\big|_{e^{\frac{-\pi i}{2k+N-2}}=t,t^{N-1}=v}
\end{equation}
For the Lie group $SO(N)$, one has $\check c=N-2$ and $y=1$, regardless of parity.

Using the fact that $e^{2\pi ih_{\rho+\lambda}}=t^{-\varkappa_\lambda}v^{-\abs\lambda}$, the procedure is very similar to the case of $U(N)$. The quantum dimension of $V_\lambda$, which is $W_\lambda(\Torus^1_0)$, is a function of $t$ and $v$ that we denote $d_\lambda(t,v)$. Thank to Weyl character formula, it is given by the character of $SO(N)$; there are explicit expressions in \cite{Bouchard04a}. 

The final result is the exact analogous of \eqref{homfly_torus},
\begin{equation}
\label{kauffman_torus}
K_\lambda^{(n,m)}(t,v)=t^{mn\varkappa_\lambda}v^{mn\abs\lambda}\sum_{|\nu|\le n|\lambda|}b_{\lambda,n}^\nu t^{-\frac m n\varkappa_\nu}v^{-\frac m n\abs\nu}d_\nu(t,v).
\end{equation}
This formula had in fact been derived by L. Chen and Q. Chen \cite{Chen09a}; the proof is similar to \cite{Lin06a}.

The main difference, as compared with \eqref{homfly_torus}, is that the coefficients $b_{\lambda,n}^\nu$ are those of $SO(N)$, and they are nonzero also for $\abs\nu\neq n\abs\lambda$. To express these coefficients in terms of the $a_{\lambda,n}^\nu$, we use relations between characters of $SO(N)$ and $U(N)$ obtained by Littlewood \cite{Littlewood40a}. There are two formulae that give $b_{\lambda,n}^\nu$:
\begin{equation}\label{adams_so}
\begin{array}{rl}
b_{\lambda,n}^\nu&\displaystyle =\sum_{\eta\in\Part}\sum_{\mu=\bar\mu} (-1)^{\frac{\abs\mu-r(\mu)}{2}}N_{\mu\eta}^\lambda \sum_{\abs\tau=n\abs\eta} a_{\eta,n}^\tau \sum_{\xi\in\Part} \sum_{\nu\in\Part}(-1)^{\abs\xi} N_{\xi\nu}^\tau\\
& \displaystyle=\sum_{\eta\in\Part}\sum_{\gamma\in\mathscr C} (-1)^{\abs\gamma/2}N_{\gamma\eta}^\lambda\sum_{\abs\tau=n\abs\eta}a_{\eta,n}^\tau\sum_{\nu\in\Part}\sum_{\delta\in\mathscr D} N_{\delta\nu}^\tau.
\end{array}\end{equation}
More details, including notations, can be found in Appendix \ref{so_su}. In principle the first formula applies to $N$ odd and the second to $N$ even, but they seem to give the same result. A similar situation occurs for tensor products where the decomposition does not depend on the parity of $N$.

\begin{example}\label{kauff_2}
For $(2,m)$-torus knots, the colored Kauffman invariants are given by
{\small
\begin{align*}
K_{\Yboxdim4pt\yng(1)}^{(2,m)}&=v^{2m}\Big[t^{-m}v^{-m} d_{(2)}-t^{m}v^{-m}d_{(1^2)} + 1\Big] 
 \end{align*}
 \begin{multline*}
K_{\Yboxdim4pt\yng(2)}^{(2,m)}=t^{4m}v^{4m}\Big[t^{-6m}v^{-2m} d_{(4)}-t^{-2m}v^{-2m} d_{(3,1)}\\
+v^{-2m} d_{(2^2)}+t^{-m}v^{-m} d_{(2)}-t^{m}v^{-m}d_{(1^2)} + 1 \Big]
 \end{multline*}
 \begin{multline*}
K_{\Yboxdim4pt\yng(1,1)}^{(2,m)}=t^{-4m}v^{4m}\Big[v^{-2m}d_{(2^2)}-t^{2m}v^{-2m}d_{(2,1^2)}\\
+t^{6m}v^{-2m}d_{(1^4)}+t^{-m}v^{-m} d_{(2)}-t^{m}v^{-m}d_{(1^2)} + 1 \Big]
 \end{multline*}
 \begin{align*}
K_{\Yboxdim4pt\yng(3)}^{(2,m)}&=t^{12m}v^{6m}\Big[1+t^{-15m}v^{-3m}d_{(6)}-t^{-9m}v^{-3m}d_{(5,1)}\\
&\qquad+t^{-5m}v^{-3m}d_{(4,2)}-t^{-3m}v^{-3m}d_{(3,3)}+t^{-6m}v^{-2m} d_{(4)}\\
&\qquad-t^{-2m}v^{-2m} d_{(3,1)}+v^{-2m} d_{(2^2)}+t^{-m}v^{-m} d_{(2)}-t^{m}v^{-m}d_{(1^2)} \Big]
 \end{align*}
 \begin{align*}
K_{\Yboxdim4pt\yng(2,1)}^{(2,m)}&=v^{6m}\Big[1+t^{-5m}v^{-3m}d_{(4,2)}-t^{-3m}v^{-3m}d_{(4,1^2)}-t^{-3m}v^{-3m}d_{(3^2)}\\
&\qquad+t^{3m}v^{-3m}d_{(3,1^3)}+t^{3m}v^{-3m}d_{(2^3)}-t^{5m}v^{-3m}d_{(2^2,1^2)}\\
&\qquad+t^{-6m}v^{-2m} d_{(4)}-t^{-2m}v^{-2m} d_{(3,1)}+2v^{-2m} d_{(2^2)}\\
&\qquad-t^{2m}v^{-2m}d_{(2,1^2)}+t^{6m}v^{-2m}d_{(1^4)}+2t^{-m}v^{-m} d_{(2)}-2t^{m}v^{-m}d_{(1^2)}\Big]
 \end{align*}
 \begin{align*}
K_{\Yboxdim4pt\yng(1,1,1)}^{(2,m)}&=t^{-12m}v^{6m}\Big[1+t^{3m}v^{-3m}d_{(2^3)}-t^{5m}v^{-3m}d_{(2^2,1^2)}\\
&\qquad+t^{9m}v^{-3m}d_{(2,1^4)}-t^{15m}v^{-3m}d_{(1^6)}+t^{-2m}d_{(2^2)}\\
&\qquad-t^{2m}v^{-2m}d_{(2,1^2)}+t^{6m}v^{-2m}d_{(1^4)}+t^{-m}v^{-m} d_{(2)}-t^{m}v^{-m}d_{(1^2)}\Big]
\end{align*}}
\end{example}
\begin{example}\label{kauff_3}
For $(3,m)$-torus knots we further obtain
{\small
\begin{align*}
K_{\Yboxdim4pt\yng(1)}^{(3,m)}&=v^{2m}\Big[t^{-2m}d_{(3)}-d_{(2,1)} + t^{2m}d_{(1^3)}\Big] 
 \end{align*}
 \begin{multline*}
K_{\Yboxdim4pt\yng(2)}^{(3,m)}=t^{6m}v^{6m}\Big[t^{-10m}v^{-2m}d_{(6)}-t^{-6m}v^{-2m}d_{(5,1)} + t^{-2m}v^{-2m}d_{(4,1^2)}\\
 + t^{-2m}v^{-2m}d_{(3^2)}-v^{-2m}d_{(3,2,1)}+ t^{2m}v^{-2m}d_{(2^3)}+1\Big] 
 \end{multline*}
  \begin{multline*}
K_{\Yboxdim4pt\yng(1,1)}^{(3,m)}=t^{-6m}v^{6m}\Big[t^{-2m}v^{-2m}d_{(3^2)}-v^{-2m}d_{(3,2,1)}+ t^{2m}v^{-2m}d_{(3,1^3)}\\
 + t^{2m}v^{-2m}d_{(2^3)}-t^{6m}v^{-2m} d_{(2,1^4)}+t^{10m}v^{-2m}d_{(1^6)}+1\Big]
 \end{multline*}}
\end{example}

\begin{rem}
These results are rather simple as compared with formula \eqref{adams_so} for the Adams coefficients. We observed important cancellations of terms; thus it might be possible to simplify \eqref{adams_so}. In particular, Kauffman invariants present the following recursive structure: $K_{\Yboxdim4pt\yng(1)}$ appears in $K_{\Yboxdim4pt\yng(2)}$, $K_{\Yboxdim4pt\yng(2)}$ appears in turn in $K_{\Yboxdim4pt\yng(3)}$, and so on.
\end{rem}

\section{Quantum Invariants of Torus Links}
The formulae for HOMFLY and Kauffman invariants generalizes to links by using the fusion rule \eqref{fusion} and taking into account the framing correction. One obtains
\begin{gather}
H_{\lambda_1,\dots,\lambda_L}^{(Ln,Lm)}=t^{mn\sum_{\alpha=1}^L\varkappa_{\lambda_\alpha}}\sum_{\mu\in\Part} N_{\lambda_1,\dots,\lambda_L}^\mu t^{-mn\varkappa_\mu}H_\mu^{(n,m)}\\
K_{\lambda_1,\dots,\lambda_L}^{(Ln,Lm)}=t^{mn\sum_{\alpha=1}^L\varkappa_{\lambda_\alpha}} v^{\sum_{\alpha=1}^L mn\abs{\lambda_\alpha}} \sum_{\mu\in\Part} M_{\lambda_1,\dots,\lambda_L}^\mu t^{-mn\varkappa_\mu}v^{-m n\abs\mu} K_\mu^{(n,m)}\nonumber
\end{gather}
for the $(Ln,Lm)$-torus link. The first formula is equivalent to the formula of \cite{Lin06a} for torus links.
 
\begin{example}
{\small
For $(4,2m)$-torus links, the colored Kauffman invariants are
\begin{multline*}
 K_{\Yboxdim4pt\yng(1),\yng(1)}^{(4,2m)}=v^{4m}\Big[3+t^{-6m}v^{-2m}d_{(4)}-t^{-2m}v^{-2m}d_{(3,1)}+2v^{-2m}d_{(2^2)}\\
-t^{2m}v^{-2m}d_{(2,1^2)}+t^{6m}v^{-2m}d_{(1^4)}+2t^{-m}v^{-m}d_{(2)}-2t^{m}v^{-m}d_{(1^2)}\Big]
 \end{multline*}
\begin{align*}
 K_{\Yboxdim4pt\yng(2),\yng(1)}^{(4,2m)}&=t^{4m}v^{6m}\Big[t^{-15m}v^{-3m}d_{(6)}-t^{-9m}v^{-3m}d_{(5,1)}+2t^{-5m}v^{-3m}d_{(4,2)}\\
 &\qquad\qquad-t^{-3m}v^{-3m}d_{(4,1^2)}-2t^{-3m}v^{-3m}d_{(3^2)}+t^{3m}v^{-3m}d_{(3,1^3)}\\
 &\qquad\qquad+t^{3m}v^{-3m}d_{(2^3)}-t^{5m}v^{-3m}d_{(2^2,1^2)}+2t^{-6m}v^{-2m}d_{(4)}\\
 &\qquad\qquad-2t^{-2m}v^{-2m}d_{(3,1)}+3v^{-2m}d_{(2^2)}-t^{2m}v^{-2m}d_{(2,1^2)}\\
 &\qquad\qquad+t^{6m}v^{-2m}d_{(1^4)}+4t^{-m}v^{-m}d_{(2)}-4t^mv^{-m}d_{(1^2)}+3\Big]
 \end{align*}
 \begin{align*}
  K_{\Yboxdim4pt\yng(1,1),\yng(1)}^{(4,2m)}&=t^{-4m}v^{6m}\Big[t^{-5m}v^{-3m}d_{(4,2)}-t^{-3m}v^{-3m}d_{(4,1^2)}+t^{-3m}v^{-3m}d_{(3^2)}\\
  &\qquad\qquad+t^{3m}v^{-3m}d_{(3,1^3)}+2t^{3m}v^{-3m}d_{(2^3)}-2t^{5m}v^{-3m}d_{(2^2,1^2)}\\
  &\qquad\qquad+t^{9m}v^{-3m}d_{(2,1^4)}-t^{15m}v^{-3m}d_{(1^6)}+t^{-6m}v^{-2m}d_{(4)}\\
  &\qquad\qquad-t^{-2m}v^{-2m}d_{(3,1)}+3v^{-2m}d_{(2^2)}-2t^{2m}v^{-2m}d_{(2,1^2)}\\
  &\qquad\qquad+t^{6m}v^{-2m}d_{(1^4)}+4t^{-m}v^{-m}d_{(2)}-4t^{m}v^{-m}d_{(1^2)}+3\Big]
 \end{align*}}
 \end{example}

\section{Mariño Conjecture for the Kauffman Invariants}
Many highly nontrivial properties of the Kauffman invariants as well as their relation to the HOMFLY invariants might be explained by a conjecture of Mariño \cite{Marino09c} that completes the prior partial conjecture of Bouchard, Florea and Mariño \cite{Bouchard05a}. This new conjecture is similar to the Labastida-Mariño-Ooguri-Vafa conjecture \cite{Ooguri00a,Labastida00a} for HOMFLY invariants, but it applies to Kauffman invariants and HOMFLY invariants with composite representations.

\subsection{Statement of the conjecture}
The conjecture contains two distinct statements, one for HOMFLY invariants including composite representations and one for both Kauffman and HOMFLY invariants. We first construct the generating functions
\begin{gather*}
Z_H(\Lin)=\sum_{\tiny \begin{matrix}\lambda_1,\dots,\lambda_L\\\mu_1,\dots,\mu_L\end{matrix}} H_{[\lambda_1,\mu_1],\dots,[\lambda_L,\mu_L]}^\Lin(t,v) s_{\lambda_1}(\vec x_1)s_{\mu_1}(\vec x_1)\cdots s_{\lambda_L}(\vec x_L)s_{\mu_L}(\vec x_L)\\
 Z_K(\Lin)=\sum_{\lambda_1,\dots,\lambda_L}K_{\lambda_1,\dots,\lambda_L}^\Lin(t,v) s_{\lambda_1}(\vec x_1)\cdots s_{\lambda_L}(\vec x_L),
 \end{gather*}
where all sums run over partitions including the empty one. The reformulated invariants $h_{\lambda_1,\dots,\lambda_L}(t,v)$ and $g_{\lambda_1,\dots,\lambda_L}(t,v)$ are defined by
\begin{gather}
\log Z_H=\sum_{d=1}^\infty \sum_{\lambda_1,\dots,\lambda_L}h_{\lambda_1,\dots,\lambda_L}(t^d,v^d)s_{\lambda_1}(\vec x_1^d)\cdots s_{\lambda_L}(\vec x_L^d)\\
\log Z_K-\frac 1 2\log Z_H=\sum_{d\text{ odd}}\sum_{\lambda_1,\dots,\lambda_L} g_{\lambda_1,\dots,\lambda_L}(t^d,v^d)s_{\lambda_1}(\vec x_1^d)\cdots s_{\lambda_L}(\vec x_L^d).\nonumber
\end{gather}
All reformulated invariants can be expressed in terms of the original invariants through computing connected vacuum expectation values, following the procedure of \cite{Labastida02a}. We suggest an alternative procedure in Appendix \ref{comput_reform}. For a knot, the lowest-order invariants are 
\begin{align*}
g_{\Yboxdim4pt\yng(1)}(t,v)&=K_{\Yboxdim4pt\yng(1)}(t,v)-H_{\Yboxdim4pt\yng(1)}(t,v)\\
g_{\Yboxdim4pt\yng(2)}(t,v)&=K_{\Yboxdim4pt\yng(2)}(t,v)-\frac 1 2K_{\Yboxdim4pt\yng(1)}(t,v)^2-H_{\Yboxdim4pt\yng(2)}(t,v)+H_{\Yboxdim4pt\yng(1)}(t,v)^2-\frac 1 2H_{[\Yboxdim4pt\yng(1),\Yboxdim4pt\yng(1)]}(t,v)\\
g_{\Yboxdim4pt\yng(1,1)}(t,v)&=K_{\Yboxdim4pt\yng(1,1)}(t,v)-\frac 1 2K_{\Yboxdim4pt\yng(1)}(t,v)^2-H_{\Yboxdim4pt\yng(1,1)}(t,v)+H_{\Yboxdim4pt\yng(1)}(t,v)^2-\frac 1 2H_{[\Yboxdim4pt\yng(1),\Yboxdim4pt\yng(1)]}(t,v).
\end{align*}
More examples can be found in \cite{Marino09c}. We now introduce the block-diagonal matrix $M_{\lambda\mu}$, which is
$$M_{\lambda\mu}(t)=\sum_{\nu\in\Part_n} \chi_\lambda(\mathcal C_\nu)\chi_\mu(\mathcal C_\nu)\prod_{i=1}^n (t^{\nu^i}-t^{-\nu^i})$$
for $\abs\lambda=\abs\mu=n$ and zero otherwise. We finally define
\begin{gather}
\begin{array}{c}
\displaystyle\hat h_{\lambda_1,\dots,\lambda_L}(t,v)=\sum_{\mu_1,\dots,\mu_L} M^{-1}_{\lambda_1\mu_1}(t)\cdots M^{-1}_{\lambda_L\mu_L}(t) h_{\mu_1,\dots,\mu_L}(t,v)\\
\displaystyle\hat g_{\lambda_1,\dots,\lambda_L}(t,v)=\sum_{\mu_1,\dots,\mu_L} M^{-1}_{\lambda_1\mu_1}(t)\cdots M^{-1}_{\lambda_L\mu_L}(t) g_{\mu_1,\dots,\mu_L}(t,v).
\end{array}
\end{gather}
The conjecture states that
$$\hat h_{\lambda_1,\dots,\lambda_L}\in z^{L-2}\ZZ[z^2,v^{\pm 1}]\qquad\text{and}\qquad\hat g_{\lambda_1,\dots,\lambda_L}\in z^{L-1}\ZZ[z,v^{\pm 1}],$$
with $z=t-t^{-1}$. In other words, there exist integer invariants $\mathcal N_{\lambda_1,\dots,\lambda_L;g,Q}^{c}$ ($c=0,1,2$) such that
\begin{equation}
\hat h_{\lambda_1,\dots,\lambda_L}(z,v)=z^{L-2}\sum_{g\ge 0}\sum_{Q\in\ZZ} \mathcal N_{\lambda_1,\dots,\lambda_L;g,Q}^0z^{2g-1}v^Q
\end{equation}
and
\begin{equation*}
\hat g_{\lambda_1,\dots,\lambda_L}(z,v)=z^{L-1}\sum_{g\ge 0}\sum_{Q\in\ZZ}\Big[ \mathcal N_{\lambda_1,\dots,\lambda_L;g,Q}^1 z^{2g}v^Q+\mathcal N_{\lambda_1,\dots,\lambda_L;g,Q}^2 z^{2g+1}v^Q\Big].
\end{equation*}

\subsection{Direct computations}We now proceed to various tests of the conjecture for torus knots and links using formulae \eqref{homfly_torus} and \eqref{kauffman_torus}. Unfortunately, we cannot test the conjecture for all torus knots at once, and since the complexity increases rapidly, only the cases $(2,m)$ and $(3,m)$ are tractable.

In principle the integer invariants can be computed as functions of $m$ (though they are in infinite number if $m$ is not fixed). In practice, however, we had to fix $m$ to obtain results in a reasonable amount of time. We have obtained generic results in a few cases, to which we shall return later on.

For $(2,m)$-torus knots, we have checked the conjecture for various values of $m$ and for several low-dimensional representations. Most of these tests had already been made by \cite{Marino09c}, using the formulae of \cite{Borhade05a} for Kauffman invariants. Recently, analogous tests have also been made for this class of knots with nontrivial framing \cite{Paul10a}.

For $(3,m)$-torus knots, we were able to verify parts of the conjecture. As an illustration, we have compiled the integer invariants $\mathcal N_{\Yboxdim4pt\yng(2),g,Q}^{1}$ of the $(3,4)$-torus knot in \tabref{tab2}.

We further have proceeded to nontrivial checks of the conjecture for $(2,2m)$- and $(4,2m)$-torus links. For definiteness we consider here the two-component trefoil link $\Torus^4_6$. We have obtained
{\small \begin{align*}\hat g_{\Yboxdim3pt\yng(1),\yng(1)}&=(36v^9-180v^7+288v^5-144v^3)z+(57v^9-453v^7+912v^5-516v^3)z^3\\
&+(36v^9-494v^7+1286v^5-828v^3)z^5+(10v^9-286v^7+1001v^5-725v^3)z^7\\
&+(v^9-91v^7+455v^5-365v^3)z^9-(15v^7-120v^5+105v^3)z^{11}\\
&-(v^7-17v^5+16v^3)z^{13}+(v^5-v^3)z^{15},
\end{align*}}from which the integer invariants can be read. We have also compiled the invariants $\mathcal N_{\Yboxdim4pt\yng(2),\yng(1);g,Q}^{2}$ of the same link in \tabref{tab3}.

It is interesting to remark that in the above formula all $\mathcal N_{\Yboxdim4pt\yng(1),\yng(1);g,Q}^{2}$ vanish. For torus knots it is the case that $\mathcal N_{\Yboxdim4pt\yng(1),g,Q}^{2}=0$, because of Labastida-Pérez relation \cite{Labastida96a}
\begin{equation*}
\frac 1 2\big[K^{(n,m)}_{\Yboxdim4pt\yng(1)}(z,v)-K^{(n,m)}_{\Yboxdim4pt\yng(1)}(-z,v)\big]=H^{(n,m)}_{\Yboxdim4pt\yng(1)}(z,v)
\end{equation*}
between the HOMFLY and the Kauffman polynomials. But this relation does not hold in for torus links, and we suggest that an appropriate generalization is
\begin{equation}
\frac{1}{2}\Big[K^{(2n,2m)}_{\Yboxdim4pt\yng(1),\yng(1)}+\bar K^{(2n,2m)}_{\Yboxdim4pt\yng(1),\yng(1)}\Big]-K^{(n,m)}_{\Yboxdim4pt\yng(1)}\bar K^{(n,m)}_{\Yboxdim4pt\yng(1)}=H^{(2n,2m)}_{\Yboxdim4pt[\yng(1),\emptyset],[\yng(1),\emptyset]}+H^{(2n,2m)}_{\Yboxdim4pt[\yng(1),\emptyset],[\emptyset,\yng(1)]}
\end{equation}
for two-components torus links, where the bar stands for the substitution $z\to -z$. More generally, we are led to conjecture that $\mathcal N_{\Yboxdim4pt\yng(1),\dots,\yng(1);g,Q}^{2}=0$ for any torus link.
 
We return to the computation of the integer invariants as functions of $m$. Formally $\mathcal N_{\lambda,g,Q}^c$ is a polynomial in $m$ with rational coefficients, enjoying the following properties: for each $m$ such that $\gcd(n,m)=1$,
\begin{enumerate}
  \item $\mathcal N_{\lambda,g,Q}^c$ is an integer;
  \item $\mathcal N_{\lambda,g,Q}^c$ vanishes for large $g$ and large $|Q|$.
\end{enumerate}
For the $(2,m)$-torus knot we were able to perform the computation for the representation $\Yboxdim4pt\yng(2)$ and for $g=0,1,2$. The results are compiled in \tabref{tab1}. The fact that these complicated expressions are indeed integers is not completely trivial: let us show for instance that
$$\mathcal N_{\Yboxdim4pt\yng(2),2,3m}^1=\frac{m^2(m^2-1)(2m+1)(339m^2+296m-259)}{5760}\in\ZZ.$$
Let $p(m)=339m^2+296m-259$. We test the divisibility of the numerator by $5760=2^7\cdot 3^2\cdot 5$ for $m$ odd.
\begin{enumerate}
\item Divisibility by $5$: since $p(m)\equiv 4m^2+m+1\pmod 5$, we see that $\{m,m-1,2m+1,p(m),m+1\}$ always contains a multiple of $5$.
\item Divisibility by $3^2$: we observe that $p(m)\equiv 2m+2\pmod 3$, hence both sets $\{m,2m+1,p(m)\}$ and $\{m,m-1,m+1\}$ contain a multiple of $3$.
\item Divisibility by $2^7$: one has to consider classes modulo $16$, in particular $p(m)\equiv 3m^2+8m+13\pmod{16}$. For $m\equiv 1\pmod{8}$, we have two multiples of $8$ ($m-1$ and $p(m)$). Similarly for $m\equiv 7\pmod{8}$. In both cases there is an additional even factor ($m+ 1$ resp. $m-1$). If now $m\equiv 3\pmod{8}$, then $p(m)$ is a multiple of $16$. Also $m+1$ is a multiple of $4$, and $m-1$ is even. Similarly for $m\equiv 5\pmod{8}$.
\end{enumerate}

\begin{table}
{\small \begin{tabular}{|c|c|c|c|c|c|c|}
\hline
$\mathcal N_{\Yboxdim4pt\yng(2),g,Q}^1$ & $Q=11$ & $Q=13$ & $Q=15$ & $Q=17$ & $Q=19$ & $Q=21$\\
\hline\hline
$g=0$ & $-750$ & $3300$ & $- 5590$ & $4470$ & $- 1620$ & $190$\\
$g=1$ &$-5425$ & $27200$ & $- 49845$ & $40925$ & $- 14100$ & $1245$ \\
$g=2$ & $-17325$ & $103245$ & $- 208513$ & $176489$ & $- 57299$ & $3403$\\
$g=3$ & $-32020$ & $233835$ & $- 525576$ & $457606$ & $- 138841$ & $4996$\\
$g=4$ & $-37920$ & $348942$ & $- 880083$ & $785953$ & $- 221259$ & $4367$\\
$g=5$ & $-30177$ & $360999$ & $- 1031637$ & $942490$ & $- 244055$ & $2380$\\
$g=6$ & $-16472$ & $266337$ & $- 873189$ & $814080$ & $- 191572$ & $816$\\
$g=7$ & $-6175$ & $142083$ & $- 543170$ & $515506$ & $- 108415$ & $171$\\
$g=8$ & $-1561$ & $54921$ & $- 250153$ & $241067$ & $- 44294$ & $20$\\
$g=9$ & $-254$ & $15227$ & $- 85099$ & $83052$ & $- 12927$ &  $1$\\
$g=10$ & $-24$ & $2950$ & $- 21102$ & $20801$ & $-2625$ & \\
$g=11$ & $-1$ & $379$ & $- 3707$ & $3681$ & $- 352$ &\\
$g=12$ &  & $29$ & $- 437$ & $436$ & $-28$ & \\
$g=13$ & & $1$ & $-31$ & $31$ & $-1$ & \\
$g=14$ & & & $-1$ & $1$ & & \\
\hline
\end{tabular}\vspace{.5cm}
\caption{Integer invariants for the $(3,4)$-torus knot.}\label{tab2}}
\end{table}
\begin{table}
{\small \centering
 
  \begin{tabular}{|c|c|c|c|c|c|}
 \hline
 $\mathcal N^2_{\Yboxdim4pt\yng(2),\yng(1);g,Q}$ & $Q=7$ & $Q=9$ & $Q=11$ & $Q=13$ & $Q=15$\\
 \hline\hline
$g=0$ & $1512$ & $- 5292$ & $6804$ & $- 3780$ & $756$\\
$g=1$ & $10206$ & $- 35847$ & $44037$ & $-  22113$ & $3717$\\
$g=2$ & $30177$ & $- 108507$ & $127764$ & $- 57204$ & $7770$\\
$g=3$ & $51554$ & $- 193977$ & $220023$ & $-86738$ & $9138$\\
$g=4$ & $56536$ & $- 227868$ & $250418$ & $- 85792$ & $6706$\\
$g=5$ & $41817$ & $- 185180$ & $198272$ & $- 58102$ &  $3193$\\
$g=6$ & $21318$ & $- 106758$ & $111925$ & $ - 27472$ & $987$\\
$g=7$ & $7505$ & $- 44024$ & $45393$ & $- 9065$ & $191$\\
$g=8$ & $1792$ & $-12902$ & $13135$ & $- 2046$ & $21$\\
$g=9$ & $277$ & $- 2624$ & $2647$ & $- 301$ & $1$\\
$g=10$ & $25$ & $- 352$ & $353$ & $- 26$ & \\
$g=11$ & $1$ & $- 28$ & $28$ & $-1$ & \\    
$g=12$ & & $-1$ & $1$ &  & \\
\hline
 \end{tabular}\vspace{0.5cm}
 \caption{Integer invariants for the $(4,6)$-torus link}\label{tab3}}
 \end{table}
 
 \textbf{Acknowledgments.} We would like to thank Marcos Mariño for suggesting the subject of our master's thesis, for helpful discussions on various topics, and for comments on the manuscript. We also thank Andrea Brini for helpful discussions on large-$N$ duality and matrix models.

\appendix\section{Action of the Knot Operators on $\Hilb(\Torus^2)$}\label{appendix_A}
This appendix is devoted to the proof of formula \eqref{cabling_unknot} for the action of $\tens W^{(n,0)}_\lambda$ on $\ket\rho$. Though it can be deduced from generic considerations on Wilson loops, we provide an alternative derivation starting from the action of torus knot operators on $\Hilb(\Torus^2)$.

Our considerations are based on the following remark: the basis elements of $\Hilb(\Torus^2)$ are anti-symmetrized sums over the Weyl group
\begin{equation}
\ket{p}=\sum_{w\in\W}(-1)^w f^{w(p)},
\end{equation}
where $f^p$ is some complex function that admits a Fourier series expansion \cite{Labastida91a}. Hence we can work with the formal anti-symmetric elements
$$A_p=\sum_{w\in\W}(-1)^w e^{w(p)}$$
in $\ZZ[\Lambda_W]$ and translate the results to $\Hilb(\Torus^2)$.

We derive the required formula
\begin{equation}
\sum_{\mu\in M_\lambda} \ket{\rho+n\mu}=\sum_{\nu\in\Lambda_W}c_{\lambda,n}^\nu\ket\nu
\end{equation}
from simple properties of the Weyl group and of the weight lattice.

\begin{lemm}
The following equality holds in $\ZZ[\Lambda_W]$:
$$\sum_{\mu\in M_\lambda} A_{\rho+n\mu}=\sum_{\nu\in\Lambda_W}c_{\lambda,n}^\nu A_{\rho+\nu},$$
where $c_{\lambda,n}^\nu$ are the coefficient of the Adams operation \eqref{adams}.
\end{lemm}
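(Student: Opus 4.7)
The plan is to reduce both sides of the claimed identity to the product $A_\rho \cdot \Psi_n\ch_\lambda$ inside $\ZZ[\Lambda_W]$, using only the Weyl character formula and the $\W$-invariance of the weight system $M_\lambda$.

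First I would record the two algebraic facts that drive everything. The Weyl character formula gives $\ch_\nu = A_{\rho+\nu}/A_\rho$ for $\nu\in\Lambda_W^+$, so multiplying the defining relation \eqref{adams} through by $A_\rho$ yields
\begin{equation*}
A_\rho\cdot\Psi_n\ch_\lambda=\sum_{\nu\in\Lambda_W^+}c_{\lambda,n}^\nu\,A_{\rho+\nu},
\end{equation*}
which is already the right-hand side of the claimed equality (extended by zero to $\Lambda_W\setminus\Lambda_W^+$). So the whole task is to identify the left-hand side $\sum_{\mu\in M_\lambda}A_{\rho+n\mu}$ with $A_\rho\cdot\Psi_n\ch_\lambda$.

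Next I would use the weight decomposition $\ch_\lambda=\sum_{\mu\in M_\lambda}e^\mu$ (multiplicities included), which under the Adams operation becomes $\Psi_n\ch_\lambda=\sum_{\mu\in M_\lambda}e^{n\mu}$. Then I would expand the left-hand side directly:
\begin{equation*}
\sum_{\mu\in M_\lambda}A_{\rho+n\mu}=\sum_{\mu\in M_\lambda}\sum_{w\in\W}(-1)^w e^{w\rho+nw\mu},
\end{equation*}
using that $w$ is linear on $\Lambda_W$. The key step is now to invoke $\W$-invariance of $M_\lambda$: for each fixed $w$, the map $\mu\mapsto w\mu$ is a multiplicity-preserving bijection of $M_\lambda$ to itself. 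Renaming the summation variable accordingly and factoring gives
\begin{equation*}
\sum_{\mu\in M_\lambda}A_{\rho+n\mu}=\Bigl(\sum_{w\in\W}(-1)^w e^{w\rho}\Bigr)\Bigl(\sum_{\mu\in M_\lambda}e^{n\mu}\Bigr)=A_\rho\cdot\Psi_n\ch_\lambda,
\end{equation*}
which combined with the first displayed equation proves the lemma.

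No step here is a serious obstacle; the only subtlety worth flagging in the writeup is that the sum $\sum_{\mu\in M_\lambda}$ must be understood as running over the weights of $V_\lambda$ with multiplicity, because only then does $\sum_{\mu}e^\mu$ equal the character and only then is the multiplicity-preserving $\W$-action available to justify the change of variables $\mu\mapsto w\mu$. Once this convention is in place (which matches the usage in \eqref{knot_op}), the proof is a two-line manipulation.
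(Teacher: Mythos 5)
Your proof is correct and follows essentially the same route as the paper's: both expand $A_{\rho+n\mu}$ over the Weyl group, use the $\W$-invariance of the weight system $M_\lambda$ to factor the sum as $(\Psi_n\ch_\lambda)A_\rho$, and conclude with the Weyl character formula applied to \eqref{adams}. Your explicit remark about counting weights with multiplicity is a sensible clarification of a point the paper leaves implicit, but it does not change the argument.
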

\begin{proof}
Using the fact that the set of weights is just permuted by the Weyl group, we immediately obtain
\begin{align*}\sum_{\mu\in M_\lambda} A_{\rho+n\mu}&=\sum_{\mu\in M_\lambda} \sum_{w\in\W} (-1)^w e^{w(\rho+n\mu)}=\sum_{\mu\in M_\lambda}e^{n\mu}\sum_{w\in\W} (-1)^w e^{w(\rho)}\\
&=(\Psi_n\ch_\lambda)A_\rho=\sum_{\nu\in\Lambda_W}c_{\lambda,n}^\nu \ch_\nu A_\rho
\end{align*}
and the conclusion follows from Weyl character formula.
\end{proof}

Some further properties of Wilson loops can be checked explicitly for torus knot operators using similar arguments \cite{Stevan09a}.

\section{Computation of the Reformulated Invariants}\label{comput_reform}
In this appendix we give explicit formulae for the reformulated invariants $h_\lambda(t,v)$ and $g_\lambda(t,v)$. Since we shall be dealing with finite collections of all different partitions, it is convenient to introduce the set $\N[\Part]$ of finitely-supported functions $\Part\longrightarrow\N$. If we use elementary functions
$$\begin{array}{rccc}
e_\lambda: & \Part & \longrightarrow & \N\\
& \mu & \longmapsto & \delta_{\lambda\mu}
\end{array},$$
each $\boldsymbol\Lambda\in\N[\Part]$ can be written as
$$\boldsymbol\Lambda=\sum_{\lambda\in\Part}n_{\boldsymbol\Lambda}(\lambda) e_\lambda,$$
where $\vec n_{\boldsymbol\Lambda}=\big(n_{\boldsymbol\Lambda}(\lambda)\big)_{\lambda\in\Lambda}$ is a sequence with finite support. Let also $\abs{\vec n}=\sum_{\lambda\in\Part}n_{\boldsymbol\Lambda}(\lambda)$ and
$$\norm{\boldsymbol\Lambda}=\sum_{\lambda\in\Part}n_{\boldsymbol\Lambda}(\lambda)\abs\lambda.$$
We introduce the following combinatoric object: $N_{\boldsymbol\Lambda}^\eta$ is defined as
$$\prod_{\lambda\in\Part}\ch_\lambda^{n_{\boldsymbol\Lambda}(\lambda)}=\sum_{\eta\in\Part}N_{\boldsymbol\Lambda}^\eta \ch_\eta.$$
Clearly, the above sum is finite and only runs on elements such that $\abs\eta=\norm{\boldsymbol\Lambda}$.


Because of composite representations, we also need two-variables polynomials $\N[\Part,\Part]$. Introducing the elementary functions
$$\begin{array}{rccc}
e_{\lambda,\mu}: & \Part\times\Part & \longrightarrow & \N\\
& (\alpha,\beta) & \longmapsto & \delta_{\lambda\alpha}\delta_{\mu\beta}
\end{array},$$
we can write $\boldsymbol\Lambda\in\N[\Part,\Part]$ as
$$\boldsymbol\Lambda=\sum_{\lambda,\mu\in\Part}n_{\boldsymbol\Lambda}(\lambda,\mu)e_{\lambda,\mu}.$$
We define as before
$$\norm{\boldsymbol\Lambda}=\sum_{\lambda,\mu\in\Part}\big(n_{\boldsymbol\Lambda}(\lambda,\mu)+n_{\boldsymbol\Lambda}(\mu,\lambda)\big)\abs\lambda$$
and $N_{\boldsymbol\Lambda}^\eta$ by
$$\prod_{\lambda,\mu\in\Part}(\ch_\lambda\ch_\mu)^{n_{\boldsymbol\Lambda}(\lambda,\mu)}=\sum_{\eta\in\Part}N_{\boldsymbol\Lambda}^\eta \ch_\eta.$$
We write $d\vert\lambda$ if $d$ divides $\abs{\lambda}$, and we let $\mu(d)$ be the M\"obius function.

By expanding the logarithm in series, we obtained the following formulae:
{\small \begin{align*} h_\lambda &= \sum_{d|\lambda}\frac{\mu(d)}{d}\sum_{\eta\in\Part_{\abs\lambda/d}} a_{\eta,d}^\lambda\sum_{\kappa_1,\kappa_2\in\Part}N_{\kappa_1\kappa_2}^\eta  \sum_{\substack{\boldsymbol\Lambda\in\N[\Part]\\ \norm{\boldsymbol\Lambda}=\abs{\kappa_1}}}\sum_{\substack{\boldsymbol\Gamma\in\N[\Part,\Part]\\ \norm{\boldsymbol\Gamma}=\abs{\kappa_2}}}2^{\abs{\vec n_{\boldsymbol\Lambda}}}\frac{(-1)^{\abs{\vec n_{\boldsymbol\Lambda}}+\abs{\vec n_{\boldsymbol\Gamma}}+1}}{\abs{\vec n_{\boldsymbol\Lambda}}+\abs{\vec n_{\boldsymbol\Gamma}}}\\
&\qquad\times\binom{\abs{\vec n_{\boldsymbol\Lambda}}+\abs{\vec n_{\boldsymbol\Gamma}}}{\vec n_{\boldsymbol\Lambda}\quad\vec n_{\boldsymbol\Gamma}} N_{\boldsymbol\Lambda}^{\kappa_1} N_{\boldsymbol\Gamma}^{\kappa_2} \prod_{\alpha\in\Part} H_{\alpha}(t^d,v^d)^{n_{\boldsymbol\Lambda}(\alpha)}  \prod_{\beta,\gamma\in\Part} H_{[\beta,\gamma]}(t^d,v^d)^{n_{\boldsymbol\Gamma}(\beta,\gamma)}  \end{align*}}
and
{\small \begin{align*} g_\lambda&= \sum_{\text{odd }d|\lambda}\frac{\mu(d)}{d}\sum_{\eta\in\Part_{\abs\lambda/d}}a_{\eta,d}^{\lambda}\sum_{\norm{\boldsymbol\Lambda}=\abs\eta} \frac{(-1)^{\abs{\vec n_{\boldsymbol\Lambda}}-1}}{\abs{\vec n_{\boldsymbol\Lambda}}}\binom{\abs{\vec n_{\boldsymbol\Lambda}}}{\vec n_{\boldsymbol\Lambda}} N_{\boldsymbol\Lambda}^{\eta} \prod_{\alpha\in\Part}K_\alpha(t^d,v^d)^{n_{\boldsymbol\Lambda}(\alpha)}\\
&-\sum_{\text{odd }d|\lambda}\frac{\mu(d)}{d}\sum_{\eta\in\Part_{\abs\lambda/d}} a_{\eta,d}^\lambda\sum_{\kappa_1,\kappa_2\in\Part}N_{\kappa_1\kappa_2}^\eta \sum_{\substack{\boldsymbol\Lambda\in\N[\Part]\\ \norm{\boldsymbol\Lambda}=\abs{\kappa_1}}}\sum_{\substack{\boldsymbol\Gamma\in\N[\Part,\Part]\\ \norm{\boldsymbol\Gamma}=\abs{\kappa_2}}}  \frac{(-1)^{\abs{\vec n_{\boldsymbol\Lambda}}+\abs{\vec n_{\boldsymbol\Gamma}}+1}}{\abs{\vec n_{\boldsymbol\Lambda}}+\abs{\vec n_{\boldsymbol\Gamma}}}\\
&\times2^{\abs{\vec n_{\boldsymbol\Lambda}}-1} \binom{\abs{\vec n_{\boldsymbol\Lambda}}+\abs{\vec n_{\boldsymbol\Gamma}}}{\vec n_{\boldsymbol\Lambda}\quad\vec n_{\boldsymbol\Gamma}} N_{\boldsymbol\Lambda}^{\kappa_1} N_{\boldsymbol\Gamma}^{\kappa_2} \prod_{\alpha\in\Part} H_{\alpha}(t^d,v^d)^{n_{\boldsymbol\Lambda}(\alpha)}  \prod_{\beta,\gamma\in\Part} H_{[\beta,\gamma]}(t^d,v^d)^{n_{\boldsymbol\Gamma}(\beta,\gamma)}
\end{align*}}

\section{Characters of $SO(N)$}\label{so_su}
The characters of $SO(2r+1)$ and $SO(2r)$ can be represented by symmetric polynomials in $\ZZ[x_1,\dots,x_r,x_1^{-1},\dots,x_r^{-1}]$, whose explicit expression are given in \cite{Fulton91a}. They can be expressed as linear combination of Schur functions in $2r$ variables. The relations are \cite{Littlewood40a}
\begin{equation}\begin{array}{c}
\ch_\lambda^{\so(2r+1)}\displaystyle=\sum_{\eta\in\Part}\sum_{\mu=\bar\mu}(-1)^{\frac{\abs\mu-r(\mu)}{2}} N_{\mu\eta}^\lambda s_\eta
\\
\ch_\lambda^{\so(2r)}\displaystyle=\sum_{\eta\in\Part}\sum_{\gamma\in\mathscr C} (-1)^{\abs\gamma/2}N_{\gamma\eta}^\lambda s_\eta
\end{array}\end{equation}
and the reciprocals
\begin{equation}\begin{array}{c}
s_\lambda\displaystyle=\sum_{\eta\in\Part}\sum_{\xi\in\Part\cup\{\emptyset\}} (-1)^{\abs\xi/2}N_{\xi\eta}^\lambda \ch_\eta^{\so(2r+1)}\\
s_\lambda\displaystyle=\sum_{\eta\in\Part}\sum_{\delta\in\mathscr D} N_{\delta\eta}^\lambda\ch^{\so(2r)}_\eta.
\end{array}\end{equation}
In these formulae, $\bar\mu$ is the partition conjugate to $\mu$, $r(\mu)$ is the rank of $\mu$, $\mathscr C$ is the set of partitions of the form $(b_1+1,b_2+1,\dots|b_1,b_2,\dots)$ in Frob\'enius notation and $\mathscr D$ is the set of partitions into even parts only. Both sets include the empty partition, and so does the sum over self-conjugate partitions.

\begin{table}[p]
{\small \centering
\begin{tabular}{|c|c|c|}
\hline
$g$ & $Q$ & $\mathcal N_{\Yboxdim4pt\yng(2),g,Q}^0$\\
\hline\hline
 & $2m$ & $\frac{m}{2}(m^2-1)(m^2+m+4)$\\
& $2m\pm 2$ & $\frac{m^2}{3}(m^3+m^2+2m-1)$\\
 0& $2m\pm 4$ & $\frac{m^2}{12}(m^2-1)^2$\\
 & $4m$ & $1$\\
 \hline\hline
 & $2m$ & $\frac{m}{24}(m^2-1)(2m^3+3m^2-m-5)$\\
1& $2m\pm 2$ & $ \frac{m}{36}(m^2-1)(2m^3+3m^2-2m-6)$\\
 & $2m\pm 4$ & $ \frac{m}{144}(m-1)(m+1)^2(2m^2+m-9)$\\
 \hline\hline
  & $2m$ & $\frac{m}{480}(m^2-1)(3m^5+6m^4-15m^3-31m^2+12m+33)$\\
 2 & $2m\pm 2$ & $\frac{m}{720}(m^2-1)(m^2-4)(3m^3+6m^2-7m-12)$\\
 & $2m\pm 4$ & $\frac{m}{2880}(m-1)(m+1)^2(m+3)(3m^3-6m^2-16m-31)$\\
 \hline
\end{tabular}\vspace{1.5cm}
\begin{tabular}{|c|c|c|}
\hline
$g$ & $Q$ & $\mathcal N_{\Yboxdim4pt\yng(2),g,Q}^1$\\
\hline\hline
 & $2m\pm 1$ & $\mp \frac{m}{2}(m^3+m^2+3m+1)$\\
& $2m\pm 3$ & $\pm \frac{m}{6}(m-1)(m+1)^2$\\
0 & $3m-2$ & $-\frac{m}{2}(m+1)(2m+1)$\\
 & $3m$ & $ m^2(2m+1)$\\
 & $3m+2$ & $ -\frac{m}{2}(m-1)(2m+1)$\\
 \hline\hline
 & $2m\pm 1$ & $\pm \frac{m}{24}(m+1)(2m^4+m^3+12m^2-m-2)$\\
 & $2m\pm 3$ & $\pm \frac{m}{72}(m-1)(m+2)^2(m+2)(2m-3)$\\
1 & $3m- 2$ & $ \frac{m}{48}(m+1)(2m+1)(9m^2+6m-7)$\\
 & $3m$ & $\frac{m^2}{24}(m+1)(2m+1)(9m-5)$\\
 & $3m+2$ & $ \frac{m}{48}(m^2-1)(2m+1)(9m-7)$\\
 \hline\hline
  & $2m\pm 1$ & $\mp \frac{m}{480}(m^2-1)(3m^5+6m^4+35m^3+48m^2-8m-8)$\\
 & $2m\pm 3$ & $\pm \frac{m}{480}(m-2)(m-1)(m+1)^2(m+2)(m^2+m-4)$\\
2 & $3m- 2$ & $ -\frac{m}{3840}(m^2-1)(226m^4+651m^3-247m^2-259m-149)$\\
 & $3m$ & $\frac{m^2}{5760}(m^2-1)(2m+1)(339m^2+296m-259)$\\
 & $3m+2$ & $ -\frac{m}{11\,520}(m^2-1)(2m+1)(339m^3-215m^2-635m+447)$\\
 \hline
\end{tabular}\vspace{1.5cm}
\begin{tabular}{|c|c|c|}
\hline
$g$ & $Q$ & $\mathcal N_{\Yboxdim4pt\yng(2),g,Q}^2$\\
\hline\hline
 & $2m$ & $-m(2m+1)$\\
0 & $3m\pm 1$ & $\mp m^2(2m+1)$\\
 & $4m$ & $m(2m+1)$\\
 \hline\hline
 & $2m$ & $\pm \frac{1}{6}m(m+1)(2m+1)(2m-1)$\\
1 & $3m\pm 1$ & $\mp \frac{1}{24}m^2(m+1)(9m-5)$\\
 & $4m$ & $ \frac{1}{6}m(m+1)(2m+1)(2m-1)$\\
 \hline\hline
   & $2m$ & $- \frac{m}{90}(m^2-1)(2m+1)(2m-1)(2m+3)$\\
2  & $3m\pm 1$ & $\mp \frac{m^2}{5760}(m^2-1)(2m+1)(339m^3+296m^2-259)$\\
   & $4m$ & $\frac{m}{90}(m^2-1)(2m+1)(2m-1)(2m+3)$\\
 \hline
\end{tabular}\vspace{5mm}
\caption{Integer invariants for the $(2,m)$-torus knot.}\label{tab1}}
\end{table}


\begin{thebibliography}{10}
\providecommand{\url}[1]{\texttt{#1}}
\providecommand{\urlprefix}{URL }
\providecommand{\eprint}[2][]{\url{#2}}

\bibitem{Borhade05a}
P.~Borhade and P.~Ramadevi, 2005.
\newblock \textit{$SO(N)$ reformulated link invariants from topological
  strings}.
\newblock Nucl. Phys. B \textbf{727}, 471--498.
\newblock \href{http://arxiv.org/abs/hep-th/0505008}{{\tt
  arXiv:hep-th/0505008}}.

\bibitem{Bouchard04a}
V.~Bouchard, B.~Florea and M.~Mari{\~n}o, 2004.
\newblock \textit{Counting Higher Genus Curves with Crosscaps in Calabi-Yau
  Orientifolds}.
\newblock J. High Energy Phys. \textbf{12}(35).
\newblock \href{http://arxiv.org/abs/hep-th/0405083}{{\tt
  arXiv:hep-th/0405083}}.

\bibitem{Bouchard05a}
V.~Bouchard, B.~Florea and M.~Mari{\~n}o, 2005.
\newblock \textit{Topological Open String Amplitudes On Orientifolds}.
\newblock J. High Energy Phys. \textbf{2}(2).
\newblock \href{http://arxiv.org/abs/hep-th/0411227}{{\tt
  arXiv:hep-th/0411227}}.

\bibitem{Chen09a}
L.~Chen and Q.~Chen, 2010.
\newblock \textit{Orthogonal Quantum Group Invariants of Links}.
\newblock \href{http://arxiv.org/abs/1007.1656}{{\tt
  arXiv:1007.1656 [math.QA]}}.

\bibitem{Chern74a}
S.-S. Chern and J.~Simons, 1974.
\newblock \textit{Characteristic Forms and Geometric Invariants}.
\newblock Ann. Math. \textbf{99}(1), 48--69.

\bibitem{Fuchs97a}
J.~Fuchs and C.~Schweigert, 1997.
\newblock \textit{Symmetries, Lie Algebras and Representations}.
\newblock Cambridge Monographs on Mathematical Physics (Cambridge University
  Press).

\bibitem{Fulton91a}
W.~Fulton and J.~Harris, 1991.
\newblock \textit{Representation Theory: A First Course} (Springer).

\bibitem{Gopakumar99a}
R.~Gopakumar and C.~Vafa, 1999.
\newblock \textit{On the Gauge Theory/Geometry Correspondence}.
\newblock Adv. Theor. Math. Phys. \textbf{3}, 1415--1443.
\newblock \href{http://arxiv.org/abs/hep-th/9811131}{{\tt
  arXiv:hep-th/9811131}}.

\bibitem{Hadji06a}
R.~J. Hadji and H.~R. Morton, 2006.
\newblock \textit{A basis for the full Homfly skein of the annulus}.
\newblock Math. Proc. Camb. Philos. Soc \textbf{141}, 81--100.
\newblock \href{http://arxiv.org/abs/math/0408078}{{\tt arXiv:math/0408078}}.

\bibitem{Halverson96a}
T.~Halverson, 1996.
\newblock \textit{Characters of the centralizer algebras of mixed tensor
  representations of $GL(r,\mathbb C)$ and the quantum group $\mathcal
  U_q(gl(r,\mathbb C))$.}
\newblock Pacific J. Math. \textbf{174}(2), 359--410.
\newblock \href{http://projecteuclid.org/euclid.pjm/1102365176}{{\tt
  euclid.pjm/1102365176}}.

\bibitem{Isidro93a}
J.~M. Isidro, J.~M.~F. Labastida and A.~V. Ramallo, 1993.
\newblock \textit{Polynomials for Torus Links from Chern--Simons Gauge
  Theories}.
\newblock Nucl. Phys. B \textbf{398}, 187--236.
\newblock \href{http://arxiv.org/abs/hep-th/9210124}{{\tt
  arXiv:hep-th/9210124}}.

\bibitem{Jones85a}
V.~F.~R. Jones, 1985.
\newblock \textit{A Polynomial Invariant for Knots via von Neumann Algebras}.
\newblock Bull. Amer. Math. Soc. \textbf{12}, 103--111.
\newblock \href{http://projecteuclid.org/euclid.bams/1183552338}{{\tt
  euclid.bams/1183552338}}.

\bibitem{Knapp05a}
A.~W. Knapp, 2005.
\newblock \textit{Lie Groups Beyond an Introduction} (Birkh{\"a}user), 2nd
  edition.

\bibitem{Koike89a}
K.~Koike, 1989.
\newblock \textit{On the Decomposition of Tensor Products of the
  Representations of the Classical Groups}.
\newblock Adv. Math. \textbf{74}, 57--86.

\bibitem{Labastida91a}
J.~M.~F. Labastida, P.~M. Llatas and A.~V. Ramallo, 1991.
\newblock \textit{Knot Operators in Chern--Simons Gauge Theory}.
\newblock Nucl. Phys. B \textbf{348}, 651--692.

\bibitem{Labastida95a}
J.~M.~F. Labastida and M.~Mari{\~n}o, 1995.
\newblock \textit{The HOMFLY Polynomial for Torus Links from Chern--Simons Gauge
  Theory}.
\newblock Int. J. Mod. Phys. A \textbf{10}(7), 1045--1089.
\newblock \href{http://arxiv.org/abs/hep-th/9402093}{{\tt
  arXiv:hep-th/9402093}}.

\bibitem{Labastida01a}
J.~M.~F. Labastida and M.~Mari{\~n}o, 2001.
\newblock \textit{Polynomial Invariants for Torus Knots and Topological
  Strings}.
\newblock Comm. Math. Phys. \textbf{217}, 423--449.
\newblock \href{http://arxiv.org/abs/hep-th/0004196}{{\tt
  arXiv:hep-th/0004196}}.

\bibitem{Labastida02a}
J.~M.~F. Labastida and M.~Mari{\~n}o, 2002.
\newblock \textit{A New Point of View in the Theory of Knot and Link
  Invariants}.
\newblock J. Knot Theory Ramif. \textbf{11}, 173--197.
\newblock \href{http://arxiv.org/abs/math/0104180}{{\tt arXiv:math/0104180}}.

\bibitem{Labastida00a}
J.~M.~F. Labastida, M.~Mari{\~n}o and C.~Vafa, 2000.
\newblock \textit{Knots, links and branes at large N}.
\newblock J. High Energy Phys. \textbf{11}(7).
\newblock \href{http://arxiv.org/abs/hep-th/0010102}{{\tt
  arXiv:hep-th/0010102}}.

\bibitem{Labastida96a}
J.~M.~F. Labastida and E.~P{\'e}rez, 1996.
\newblock \textit{A Relation Between the Kauffman and the HOMFLY Polynomials
  for Torus Knots}.
\newblock J. Math. Phys. \textbf{37}(4), 2013--2042.
\newblock \href{http://arxiv.org/abs/q-alg/9507031}{{\tt arXiv:q-alg/9507031}}.

\bibitem{Labastida89a}
J.~M.~F. Labastida and A.~V. Ramallo, 1989.
\newblock \textit{Operator Formalism for Chern--Simons Theories}.
\newblock Phys. Lett. B \textbf{227}(92).

\bibitem{Lin06a}
X.-S. Lin and H.~Zheng, 2006.
\newblock \textit{On the Hecke algebras and the colored HOMFLY polynomial}.
\newblock \href{http://arxiv.org/abs/math/0601267}{{\tt arXiv:math/0601267}}.

\bibitem{Littlewood40a}
D.~E. Littlewood, 1940.
\newblock \textit{The Theory of Group Characters} (Oxford University Press).

\bibitem{Liu07a}
K.~Liu and P.~Peng, 2007.
\newblock \textit{Proof of the Labastida-Mari{\~n}o-Ooguri-Vafa Conjecture}.
\newblock \href{http://arxiv.org/abs/0704.1526}{{\tt arXiv:0704.1526
  [math.QA]}}.

\bibitem{Marino09c}
M.~Mari{\~n}o, 2010.
\newblock \textit{String theory and the Kauffman polynomial}.
\newblock Comm. Math. Phys. \textbf{298}, 613-643.
\newblock \href{http://arxiv.org/abs/0904.1088}{{\tt arXiv:0904.1088
  [hep-th]}}.

\bibitem{Morton08a}
H.~R. Morton and P.~M.~G. Manch\'on, 2008.
\newblock \textit{Geometrical relations and plethysms in the Homfly skein of
  the annulus}.
\newblock J. London Math. Soc. \textbf{78}, 305--328.
\newblock \href{http://arxiv.org/abs/0707.2851}{{\tt arXiv:0707.2851
  [math.GT]}}.

\bibitem{Ooguri00a}
H.~Ooguri and C.~Vafa, 2000.
\newblock \textit{Knot Invariants and Topological Strings}.
\newblock Nucl. Phys. B \textbf{577}, 419--438.
\newblock \href{http://arxiv.org/abs/hep-th/9912123}{{\tt
  arXiv:hep-th/9912123}}.

\bibitem{Paul10a}
C.~Paul, P.~Borhade and P.~Ramadevi, 2010.
\newblock \textit{Composite Invariants and Unoriented Topological String
  Amplitudes}.
\newblock \href{http://arxiv.org/abs/1003.5282}{{\tt arXiv:1003.5282
  [hep-th]}}.\\
 C.~Paul, P.~Borhade and P.~Ramadevi, 2010.
 \newblock\textit{Composite Representation Invariants and Unoriented Topological String Amplitudes}.
 \newblock Nucl. Phys. B \textbf{841}(3), 448-462.

\bibitem{Devi93a}
P.~Ramadevi, T.~Govindarajan and R.~Kaul, 1993.
\newblock \textit{Three Dimensional Chern--Simons Theory as a Theory of Knots
  and Links III : Compact Semi-simple Group}.
\newblock Nucl. Phys. B \textbf{402}, 548--566.
\newblock \href{http://arxiv.org/abs/hep-th/9212110}{{\tt
  arXiv:hep-th/9212110}}.

\bibitem{Sinha00a}
S.~Sinha and C.~Vafa, 2000.
\newblock \textit{SO and Sp Chern--Simons at Large N}.
\newblock \href{http://arxiv.org/abs/hep-th/0012136}{{\tt
  arXiv:hep-th/0012136}}.

\bibitem{Stevan09a}
S.~Stevan, 2009.
\newblock \textit{Knot Operators in Chern--Simons Gauge Theory}.
\newblock Master's thesis, University of Geneva.

\bibitem{Witten89a}
E.~Witten, 1989.
\newblock \textit{Quantum Field Theory and the Jones Polynomial}.
\newblock Comm. Math. Phys. \textbf{121}, 351--399.
\newblock \href{http://projecteuclid.org/euclid.cmp/1104178138}{{\tt
  euclid.cmp/1104178138}}.

\end{thebibliography}
\end{document}